\newtheorem{theorem}{Theorem}
\newtheorem{lemma}{Lemma}
\begin{document}

\preprint{APS/123-QED}
\title{Achievable Trade-Off in Network Nonlocality Sharing }

\author{Ming-Xiao Li$^{1,2}$} \thanks{These authors contributed equally}
\author{Yuqi~Li$^{3}$} \thanks{These authors contributed equally}\email{liyuqi@hrbeu.edu.cn}
\author{Rui-Bin Xu$^{1,2}$}
\author{Mo-Ran Zhu$^{1,2}$}
\author{Haitao Ma$^{3}$}\email{hmamath@hrbeu.edu.cn}
\author{Chang-Yue Zhang$^{5}$}\email{cyzhang@jnu.edu.cn}
\author{Zhu-Jun Zheng$^{1,2}$}\email{zhengzj@scut.edu.cn}
 \affiliation{ 
\small $^{1}$ School of Mathematics, South China University of Technology, Guangzhou 510640, People's Republic of China\\
\small $^{2}$ Laboratory of Quantum Science and Engineering, South China University of Technology, Guangzhou 510641, People's Republic of China\\
\small $^{3}$ College of Mathematics Science, Harbin Engineering University, Harbin, 150001, Heilongjiang, People's Republic of China\\
\small $^{4}$ Department of Mathematics and Date Sciences, Zhongkai University of Agriculture and Engineering, Guangzhou 510225, People's Republic of China\\
\small $^{5}$ Department of Mathematics, College of Information Science and Technology, Jinan University, Guangzhou 510632, People's Republic of China
}
\begin{abstract}
Quantum networks are essential for advancing scalable quantum information processing.
Quantum nonlocality sharing provides a crucial strategy for the resource-efficient recycling of quantum correlations, offering a promising pathway toward scaling quantum networks.  Despite its potential,  the limited availability of resources introduces a fundamental trade-off between the number of sharable network branches and the achievable sequential sharing rounds. The  relationship between available entanglement and the sharing capacity remains largely unexplored, which constrains the efficient design and scalability of quantum networks. Here, we establish the entanglement threshold required to support unbounded sharing across an entire network by introducing a protocol based on probabilistic projective measurements. When resources fall below this threshold, we derive an achievable trade-off between the number of sharable branches and sharing rounds. To assess practical feasibility, we compare the detectability of our protocol with weak-measurement schemes and extend the sharing protocol to realistic noise models, providing a robust framework for nonlocality recycling in quantum networks.

\end{abstract}
\maketitle
\section{Introduction}
Networks constitute the backbone of modern information infrastructure, enabling large-scale connectivity and coordination in the digital age.
When quantum correlations are integrated as resources~\cite{PhysicsPhysiqueFizika.1.195,PhysRevLett.23.880,RevModPhys.82.665,PhysRevA.85.032119,RevModPhys.86.419,PhysRevA.90.062109,PhysRevLett.115.250401,PhysRevLett.123.140401,PhysRevA.104.042217,PhysRevA.104.042405,Tavakoli_2022,PhysRevLett.128.010403,10849969,PhysRevA.111.012624,Philip2023multipartite}, quantum networks are emerging as a foundation for unconditionally secure communication~\cite{PhysRevLett.68.3121,PhysRevLett.98.230501,Pironio_2009,PhysRevLett.108.130502,PhysRevLett.110.010503,PhysRevA.89.012301,li2025language}, distributed quantum computing~\cite{PhysRevLett.95.030505,PhysRevA.99.052303,PhysRevLett.125.110505,PhysRevA.104.052404,PhysRevLett.133.120602,zhu2025quest}, and enhanced quantum sensing~\cite{degen2017quantum,pirandola2018advances,PhysRevLett.127.060501}, signaling a new era in information processing and transmission. Over the past few decades, significant progress has been made in advancing quantum networks~\cite{PhysRevLett.81.2185,WOS:000340140300015,wehner2018quantum,PhysRevLett.133.230201,PhysRevLett.132.240801,xing2024teleportation}, as evidenced by long-distance quantum key distribution~\cite{ren2017ground,Liao2017,RevModPhys.94.035001,li2024microsatellitebasedrealtimequantumkey}, metropolitan-scale entanglement distribution~\cite{yin2012quantum,Liu2024,Knaut2024}, and the development of quantum repeater architectures~\cite{azuma2023quantum,Azuma2022QuantumRF,Li_2025}.
Despite these achievements, the construction of large-scale quantum networks remains fundamentally constrained by the fragility and high cost of quantum resources. These constraints make efficient resource recycling not merely advantageous but essential for scalable deployment.

Quantum nonlocality sharing provides a direct pathway to recycling  quantum resources~\cite{PhysRevLett.125.090401,PhysRevA.106.052412,PhysRevA.105.042436,Zhang2023,PhysRevA.103.022421,PhysRevA.102.032220,PhysRevA.104.L060201,PhysRevA.105.032211,PhysRevA.105.022411,PhysRevA.106.042218,PhysRevA.107.062419,PhysRevResearch.5.013104,PhysRevA.110.012203,PhysRevLett.133.170201,CAI20251,PhysRevA.111.022201,PhysRevA.76.062324,PhysRevA.100.052121}. In this setting, multiple observers can sequentially extract correlations from a single entangled state, while preserving its ability to demonstrate nonlocality for subsequent use. Ideally, one would aim to achieve unbounded sequential sharing across all branches of a network simultaneously, corresponding to a full-depth, full-breadth regime.
In practice, however, the amount of entanglement required to support such an ideal scenario is often unavailable. When resources are limited, the network must confront a fundamental depth–breadth trade-off: 
one may increase the number of sequential sharing rounds along a subset of branches~\cite{PhysRevA.106.042218}, or distribute the resource across more branches at the cost of reduced depth~\cite{PhysRevA.111.022201}.  This trade-off captures a central limitation on the reusability of nonlocal correlations and, consequently, places a fundamental constraint on the scalability of quantum networks. These considerations naturally raise the following questions: what amount of initial entanglement is sufficient to support unbounded sharing across a full network, and when such an ideal regime is unattainable, what fundamental trade-off governs sequential nonlocality sharing in quantum networks?

In this work, we identify an entanglement threshold that supports ideal full-network nonlocality sharing by introducing a sharing protocol assisted with probabilistic projective measurements (PPM). Below this threshold, we characterize the achievable numbers of sharable network branches and sequential sharing rounds, and derive an explicit depth–breadth trade-off governing sequential nonlocality sharing.  
From a practical perspective, we investigate the detectability of the proposed PPM protocol in comparison with weak-measurement-based protocols, showing enhanced detectability for the same number of sharing rounds. Moreover, we extend the sharing scenario to noisy settings, derive measurement probabilities that explicitly depend on noise parameters, thereby providing a complete construction of the corresponding nonlocality sharing protocol under realistic noise conditions. Our results clarify the achievable regimes of nonlocality recycling and establish a framework for recycling nonlocal correlations in quantum networks.

\begin{figure}[ptb]
\includegraphics[width=0.48\textwidth]{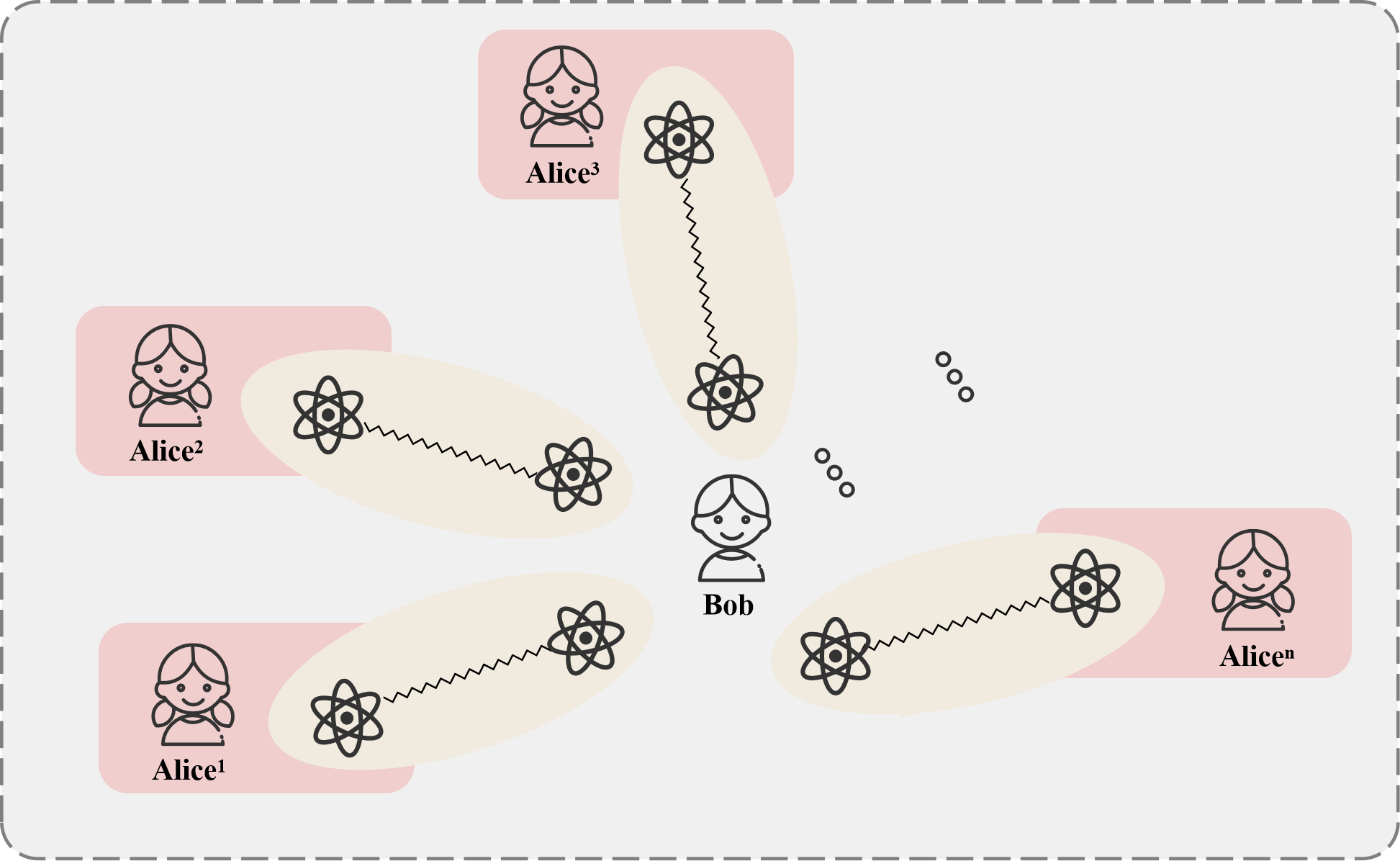}
\caption{\textbf{The $n$-star network.} The central party Bob shares a bipartite entangled state  with each of the $n$ peripheral Alice$^{1}$, Alice$^{2}$, $\cdots$, Alice$^{n}$ via independent sources.}\label{fig:star-network}
\end{figure}


\section{network nonlocality}
Beyond bell nonlocality in bipartite scenarios, recent research has increasingly focused on understanding quantum correlations in more complex networked structures. Among these, star topology networks have proven to be a particularly powerful framework, facilitating efficient management of quantum correlations, optimized resource distribution, and scalable quantum communication\cite{PhysRevA.90.062109,PhysRevA.104.042217}.

An $n$-star network, as depicted in Fig.~\ref{fig:star-network}, comprises a central node, Bob, connected to $n$ peripheral nodes, Alice$^i$ ($i \in \{1, \dots, n\}$), where Bob shares a bipartite entangled state $\rho_{A_iB}$ with each Alice$^i$ via independent sources. Each party performs local measurements: Alice$^i$ and Bob choose inputs $x_i$ and $y$ from the set $\{0, 1\}$, yielding outcomes $a_i, b \in \{+1, -1\}$, respectively. The resulting joint probability distribution is given by
\begin{align}
&P\left(a_1, \ldots, a_n, b \mid x_1, \ldots, x_n, y\right)\nonumber \\
=&\operatorname{Tr}\left[\left(A_{x_1}^1 \otimes \cdots \otimes A_{x_n}^n \otimes B_y\right)\left(\rho_{A_1 B} \otimes \cdots \otimes \rho_{A_n B}\right)\right],
\end{align}
where $A_{x_i}^i$ and $B_y$ denote the observables of the respective parties. The network is said to be $n$-local if the probability distribution admits a hidden-variable model.
In this case, it takes the following form:
\begin{equation}
\begin{aligned}
&P  \left(a_1 \ldots a_n b \mid x_1 \ldots x_n y\right) \\
=& \int\left(\prod_{i=1}^n d \lambda_i \mu_i\left(\lambda_i\right) P\left(a_i \mid x_i, \lambda_i\right)\right) P\left(b \mid y, \lambda_1, \ldots, \lambda_n\right),
\end{aligned}
\end{equation}
where $\mu_i(\lambda_i)$ is the distribution of the hidden variable $\lambda_i$. 

To derive testable constraints from this model, it is convenient to introduce the correlation functions:
\begin{equation}\label{exp}
\begin{aligned}
& \left\langle A_{x_1}^1 \ldots A_{x_n}^n B_y\right\rangle \\
 =&\sum_{a_1, \ldots, a_n, b}(-1)^{b+\sum_i a_i} P\left(a_1 \ldots a_n b \mid x_1 \ldots x_n y\right).
\end{aligned}
\end{equation}
For $n$-local probability distributions,  these correlations are bounded by the following inequality~\cite{PhysRevA.90.062109}
\begin{equation}\label{eq:nonlocal-inequality}
S_n
=
\sqrt[n]{\left|I_n\right|}+\sqrt[n]{\left|J_n\right|} \leqslant 2,
\end{equation}
where
\begin{equation}
\begin{aligned}\label{eq:correlation-functions}
&I_n=\sum_{x_1, \ldots, x_n}\left\langle A_{x_1}^1 \cdots A_{x_n}^n B_0\right\rangle, \\
&J_n= \sum_{x_1, \ldots, x_n}(-1)^{\sum_i x_i}\left\langle A_{x_1}^1 \cdots A_{x_n}^n B_1\right\rangle.
\end{aligned}
\end{equation}
A violation of this classical bound certifies that the $n$-star network exhibits nonlocal correlations. 

To characterize the entanglement of quantum systems, one useful measure is the concurrence~\cite{PhysRevLett.80.2245}. For an arbitrary two-qubit pure state, the concurrence is given by 
\begin{equation}
C\left(|\psi\rangle_{A B}\right)
=
\sqrt{2\left(1-\operatorname{Tr}\left(\rho_A^2\right)\right)}.
\end{equation}
Here, we use $C$ to denote the concurrence of the quantum state $|\psi\rangle_{AB}=\cos\theta|00\rangle+\sin\theta|11\rangle$, which is given by $\sin 2\theta$.
\begin{figure*}[ptb]
\includegraphics[width=0.8\textwidth]{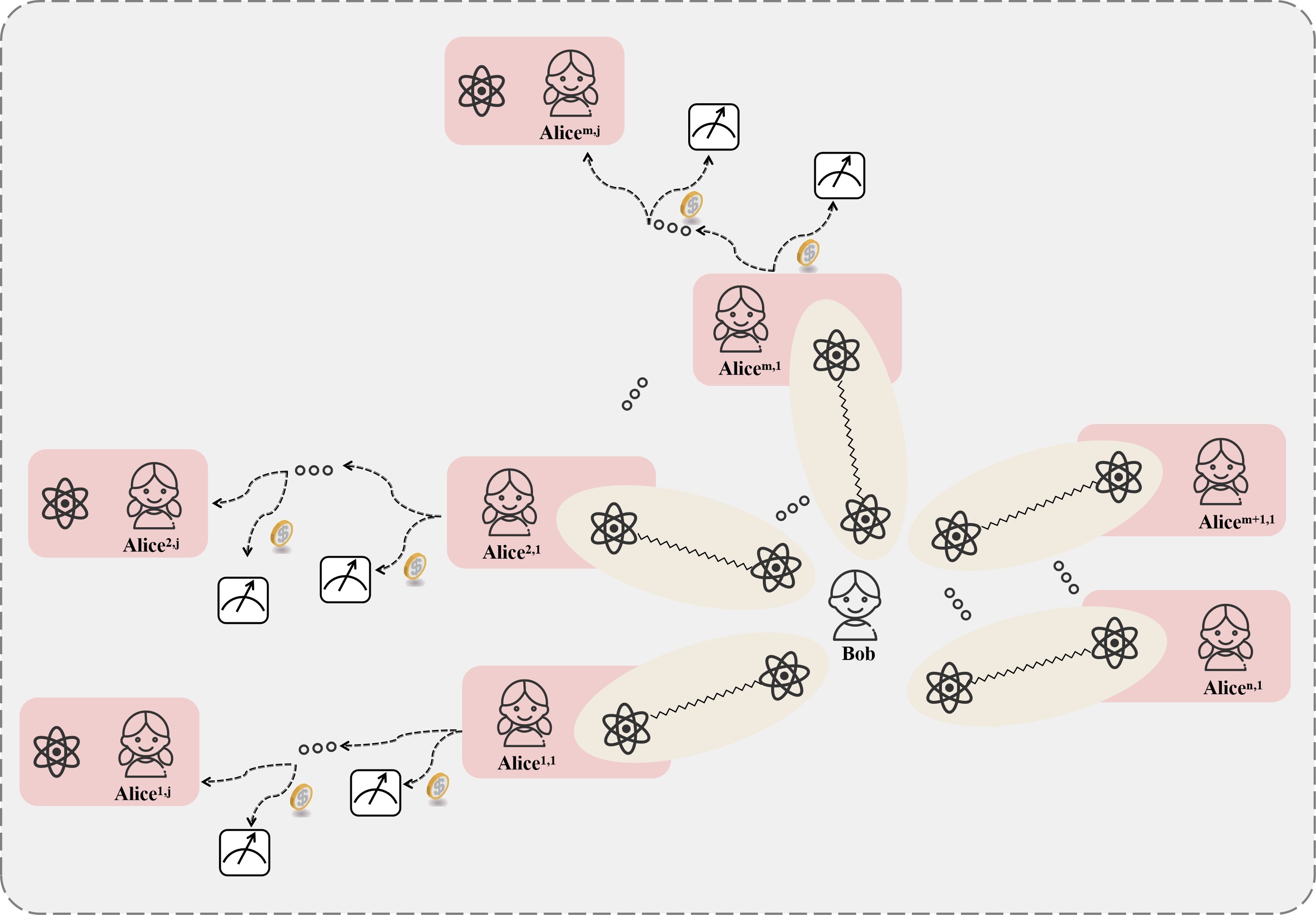}
\caption{\textbf{Sequential nonlocality sharing protocol.} The central node Bob shares an entangled state with each peripheral Alice$^{i,1}$ $(i=1,\cdots,n)$. On $m$ branches, sequential observers (Alice$^{i,j}$) perform the PPM strategy: If the input is $x_{i1}=0$, Alice$^{i,j}$ measures $\sigma_x$. If the input is $x_{ij}=1$, she flips a biased classical coin; on ``heads” she measures $\sigma_z$, while on ``tails” she leaves the state unchanged.  Each observer then passes the post-measurement state to the next observer in the same branch, repeating the PPM up to the final layer. The remaining $n-m$ branches are preserved until the final measurement to detect network nonlocality.}\label{fig:nonlocality sharing protocol}
\end{figure*}


\section{Limitations of network nonlocality sharing}
Recycling quantum resources through nonlocality sharing demands countering the disturbance induced by sequential measurements. As observers extract information, the entanglement inevitably degrades, threatening to break correlations across the network branches. To characterize this limit, we first identify the baseline requirement: under what conditions can the central node share nonlocal correlations with all $n$ peripheral nodes simultaneously, up to an arbitrary sharing depth $k$?

Addressing this question determines whether a trade-off between sharing depth and network breadth is unavoidable. If such a compromise can be circumvented, we then quantify the entanglement resource required for the initial states to support multi-round operations across all branches without destroying nonlocality. In this section, we establish the quantitative relationship among the target depth $k$, the available entanglement resources $C$, and the maximum number of branches $m$ that can sustain nonlocal correlations simultaneously.

\begin{theorem}\label{theorem1}
For any given integer $k$, define $C(k)= 2^{1-k}\sqrt{4^{\,k-1}-1}$. If each source distributes a pure entangled state with concurrence $C \in \bigl(C(k),\,1\bigr]$, then there exists a sharing protocol that enables simultaneous nonlocality sharing along all $n$ branches of the network up to the $k$-th round. Specifically, for each round $j\in\{1, \cdots, k\}$, 
\begin{equation}\label{eq:K-th nonlocality}
S_n^{n,j}=\sqrt[n]{\left|I_n^{n,j}\right|}+\sqrt[n]{\left|J_n^{n,j}\right|}>2.    
\end{equation}
\end{theorem}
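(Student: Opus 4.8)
The plan is to exploit the $n$-locality factorization to collapse the network statement to a single-branch analysis, then track how one branch's correlators degrade round by round under the PPM instrument, and finally schedule the coin biases so that every round up to $k$ violates at once. Because the sources are independent and Bob's two settings can be taken as tensor products $B_y=\bigotimes_i \hat B_y^{(i)}$ of single-qubit observables, the correlators in Eq.~\eqref{eq:correlation-functions} factorize over branches; with identical states on every branch one gets $I_n^{n,j}=(\iota_j)^n$ and $J_n^{n,j}=(\jmath_j)^n$, so that $\sqrt[n]{|I_n^{n,j}|}=|\iota_j|$ and $\sqrt[n]{|J_n^{n,j}|}=|\jmath_j|$ with $\iota_j,\jmath_j$ single-branch quantities that are \emph{independent of $n$}. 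This is the step that makes the full-breadth statement no harder than the one-branch problem, and it is why the threshold $C(k)$ will carry no $n$-dependence.

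Next I would compute $\iota_j,\jmath_j$ explicitly. For $|\psi\rangle=\cos\theta|00\rangle+\sin\theta|11\rangle$ the only nonzero two-qubit correlators are $\langle\sigma_z\otimes\sigma_z\rangle=1$ and $\langle\sigma_x\otimes\sigma_x\rangle=\sin2\theta=C$. I would model each Alice's PPM as a quantum instrument on her qubit — a sharp $\sigma_x$ measurement on input $0$, and on input $1$ the convex mixture $p\,\Phi_z+(1-p)\,\mathrm{id}$ of a sharp $\sigma_z$ measurement and the identity — and pass the averaged post-measurement state, marginalized over the previous observer's input, outcome and coin, to the next layer. Propagating the two surviving correlators through one layer gives a linear recursion whose decay rates are fixed by $p$. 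Optimizing Bob's two angles independently then yields $S_n^{n,j}=|\iota_j|+|\jmath_j|=2\sqrt{x_j^2+z_j^2}$, where $x_j$ and $z_j$ are the residual $\langle\sigma_x\sigma_x\rangle$ and coin-weighted $\langle\sigma_z\sigma_z\rangle$ correlations available at round $j$.

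The heart of the argument is then scheduling the biases $p_1,\dots,p_k$. The design I would pursue keeps the weak correlator pinned at its initial value, $x_j=C$, while letting the unit-strength correlator act as the consumable resource that halves each round, $z_j=2^{1-j}$, so that $S_n^{n,j}=2\sqrt{C^{2}+4^{\,1-j}}$. This quantity is strictly decreasing in $j$, so the binding constraint is the last round: requiring $S_n^{n,k}>2$ is equivalent to $C^{2}+4^{\,1-k}>1$, i.e. $C>\sqrt{1-4^{\,1-k}}=2^{1-k}\sqrt{4^{\,k-1}-1}=C(k)$, whereupon every earlier round $j<k$ violates automatically. A sanity check at $k=1$ gives $C(1)=0$, matching the fact that one round violates for any entangled pure state, and $C(k)\to1$ as $k\to\infty$ recovers the maximal-entanglement requirement for unbounded depth.

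The step I expect to be the real obstacle is precisely this simultaneity: the \emph{same} bias $p_j$ controls both how much correlation observer $j$ extracts at round $j$ (which must be large enough to push $S_n^{n,j}$ above $2$) and how much it disturbs the state handed to observer $j+1$ (which must be small enough to keep later rounds alive). Making this information–disturbance trade-off resolve so that the residual correlators take the clean form $x_j=C$, $z_j=2^{1-j}$ — rather than having \emph{both} correlators decay, which would force a strictly larger threshold — is where the choice of measurement directions and the asymmetry between the two inputs (sharp on input $0$, coin-biased on input $1$) must be used carefully. I would therefore spend most of the effort verifying that the recursion admits a feasible schedule realizing $S_n^{n,j}=2\sqrt{C^2+4^{\,1-j}}$ for all $j\le k$ and confirming that no round other than $k$ becomes binding; once that is in hand, the threshold $C(k)$ and the chain of inequalities in Eq.~\eqref{eq:K-th nonlocality} follow immediately.
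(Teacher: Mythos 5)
Your high-level architecture matches the paper's: factorize the correlators over the independent branches so that the $n$-th root cancels and the problem reduces to a single-branch recursion, propagate the $\sigma_x\sigma_x$ and $\sigma_z\sigma_z$ correlators through the PPM instrument, and let the last round be the binding one. However, the central quantitative claim --- that the biases can be scheduled so that $x_j=C$ and $z_j=2^{1-j}$ hold simultaneously for every $j\le k$, giving $S_n^{n,j}=2\sqrt{C^2+4^{\,1-j}}$ --- is not realizable by this protocol, and it is exactly the information--disturbance tension you flag yourself. The $\sigma_x\sigma_x$ correlator decays by the factor $(2-\alpha_l)/2$ at each earlier round, so pinning $x_j=C$ forces all $\alpha_l\to 0$ for $l<j$; but extracting the full $z$-correlation $2^{1-j}$ at round $j$ requires $\alpha_j=1$, which then halves the $x$-correlator handed to round $j+1$. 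The two requirements cannot coexist for two consecutive rounds, let alone all $k$, so the monotone-in-$j$ formula $2\sqrt{C^2+4^{\,1-j}}$ and the conclusion that ``every earlier round violates automatically'' do not apply to the actual protocol. A further structural issue: Bob is a single final observer with one fixed pair of settings serving all rounds, so his angle cannot be re-optimized per round; the paper fixes a single $\delta=\pi/2-2\theta$ throughout.

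Your model also omits the term that makes the real analysis work: on input $1$ with ``tails'' Alice does not measure, and that branch contributes the product of local marginals $\langle \mathbbm{1}\otimes\sigma_z\rangle=\cos 2\theta$, not zero. Including it, the paper obtains $S_n^{n,j}=2\bigl(\cos\delta\sin 2\theta\, P_j+\sin\delta\,\alpha_j 2^{1-j}+\sin\delta\cos 2\theta\,(1-\alpha_j)\bigr)$ with $P_j=\prod_{l<j}(2-\alpha_l)/2$. At $\delta=\pi/2-2\theta$ and $\alpha\to 0$ this sits exactly at $2$, and the violation is only first order in $\alpha_j$, with coefficient proportional to $2^{1-j}-\cos 2\theta$; positivity of that coefficient at $j=k$ is precisely $\cos 2\theta<2^{1-k}$, i.e.\ $C>C(k)$. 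So you land on the correct threshold, but via an $S$-value the protocol cannot achieve; what is actually needed (and what the paper supplies in Appendix~A) is a recursive feasibility argument showing that strictly positive, increasing $\alpha_j\in(0,1)$ exist whose first-order gains beat the accumulated $P_j$ losses in every round simultaneously. Your proposal as written does not close this gap.
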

\begin{proof}
We begin by with each of $n$ first-generation observers, denoted Alice$^{i,1}$ (where $i=1,\cdots,n$). The initial state of the entire system is given by the tensor product $\rho = \bigotimes_{i=1}^n |\psi\rangle\langle\psi|_{A_{i,1}B}$.
To establish the existence of a valid sharing strategy, we explicitly construct a probabilistic projective measurements (PPM) and employ it to execute nonlocality sharing along $m$ branches of the network. 

Each node Alice$^{i,j}$ on these branches is equipped with a biased classical coin, which yields ``heads" with probability $\alpha_{ij}\in (0,1)$ and ``tails" with probability $1-\alpha_{ij}$. The measurement process for a first-generation node Alice$^{i,1}$ on these $m$ branches is governed by a classical input bit $x_{i1} \in \{0, 1\}$.

\begin{itemize}
    \item If the input is $x_{i1} = 0$, Alice$^{i,1}$ performs a standard projective measurement defined by the operator $A^{i,1}_{0} = \sigma_x $.
    \item If the input is $x_{i1} = 1$, Alice$^{i,1}$ flips her coin. Upon an outcome of ``heads", she performs the projective measurement $A^{i,1}_{1} = \sigma_z$. If the outcome is ``tails", she implements the identity operation $\mathbbm{1}$, leaving her part of the quantum state unmeasured.
\end{itemize}
Following this step, each first-generation observer transmits their post-measurement state to a designated second-generation observers, Alice$^{i,2}$, within the same branch. 
This procedure—wherein each new generation receives the state, performs same PPM scheme based on their inputs $x_{ij}$ and coin flips probability $\alpha_{ij}$, and relays the state—repeats sequentially up to the $k-1$-th generation.

Consequently, the network reaches a state where Bob is entangled with the $k$-th generation observers on these $m$ branches, while his connection to the remaining $n-m$ branches remains with the first generation, as depicted in Fig.~\ref{fig:nonlocality sharing protocol}.
To assess the correlations generated by the resulting network structure, all Alice nodes that remain entangled with Bob perform a final round of PPM, while Bob selects his measurement according to the input $y \in  \{0, 1\}$, choosing between the observables
\begin{align}
    B_0 &
    =
    (\sin\delta \sigma_z+\cos\delta\sigma_x)^{\otimes n} \quad \nonumber 
    \\  
    B_1& 
    =
    (-\sin\delta\sigma_z+\cos\delta\sigma_x)^{\otimes n}.
\end{align}

Here, our protocol assume a uniform measurement bias $\alpha_j$ along all $m$ branches at the $j$-th sharing round, i.e., $\alpha_{1j} = \cdots = \alpha_{mj} = \alpha_j$. Based on the initial choice of $\alpha_1\in(0,1)$,  the probability sequence $\{\alpha_j\}$  is constructed recursively. For $2\leqslant j\leqslant k$, the terms are defined as: 
\begin{align}
    \label{eq:sequence-probability-combined}
\alpha_j 
=
\begin{cases}
(1+\epsilon)\frac{2^{j-1} \cos ^2 \delta\left(1-P_j\right)}{\sin \delta\left(1-2^{j-1} \sin \delta\right)}, & \theta\in(0,\frac{\pi}{4}), 
\\
(1+\epsilon) \frac{2^{j-1}(1 - \cos\delta \cdot P_j)}{\sin\delta}, & \theta=\frac{\pi}{4},
\end{cases}
\end{align}
where $P_1=1$ and $P_j=\prod_{l=1}^{j-1}\frac{2-\alpha_l}{2}$. 
In Appendix~\ref{appendixA}, we show that the constructed sequence is strictly increasing and, as $\alpha_1$ approaches zero, remains bounded within $(0,1)$, ensuring that it constitutes a valid probability sequence.

To complete the proof, we now show that the network nonlocality inequality is always violated when nonlocality is shared across all $n$ branches ($m=n$). 
For the $j$-th round, the expression for $S_n^{n,j}$ then takes the form
\begin{align}\label{eq:j-th-round-correlation-n-branches}
&S_n^{n,j} =\nonumber\\
&2\left(\cos\delta\sin2\theta P_j+\sin\delta\frac{\alpha_j}{2^{j-1}}+\sin\delta\cos2\theta(1-\alpha_j)\right).
\end{align}
The details of the derivation are provided in Appendix~\ref{appendixB}. Hence, we find that successful $k$-round nonlocality sharing requires  
\begin{equation}\label{eq:violation value of alpha_j}
\alpha_j
>
\frac{2^{j-1}\left(1-\cos\delta\sin2\theta P_j-\sin\delta\cos2\theta\right)}{\sin \delta\left(1-2^{j-1} \cos2\theta\right)},
\end{equation}
for $1\leqslant j\leqslant k$. 

For non-maximally entangled sources ($\theta\neq\frac{\pi}{4}$), we select the measurement setting $\delta+2\theta=\frac{\pi}{2}$. Consequently, the lower bound for $\alpha_j$ simplifies to:
\begin{equation}
\alpha_j
>
\frac{2^{j-1} \cos ^2 \delta\left(1-P_j\right)}{\sin \delta\left(1-2^{j-1} \sin \delta\right)}.
\end{equation}
By design, the probability sequence $\{\alpha_j\}$ constructed in Eq.~\eqref{eq:sequence-probability-combined} with $\alpha_1>0$ satisfies this constraint.
For maximally entangled sources ($\theta = \pi/4$), observing a violation in the first sharing round requires
\begin{equation}\label{eq:alpha1}
\alpha_1 > \frac{1 - \cos\delta}{\sin\delta}.
\end{equation}
Since the right-hand side of Eq.~\eqref{eq:alpha1} vanishes in the limit $\delta \to 0$, for any fixed $\alpha_1 > 0$ provided by our construction, there exists a sufficiently small measurement setting $\delta$ such that the inequality is satisfied. Thus, the strict positivity of $\alpha_1$ is sufficient to guarantee a violation.
Extending this to the $j$-th sharing round, the condition becomes
\begin{equation}\label{eq:alphaj}
\alpha_j > \frac{2^{j-1}(1 - P_j \cos\delta)}{\sin\delta}.
\end{equation}
Our construction in Eq.~\eqref{eq:sequence-probability-combined} is designed to satisfy this bound.
\end{proof}
\begin{figure}[ptb]
\includegraphics[width=0.48\textwidth]{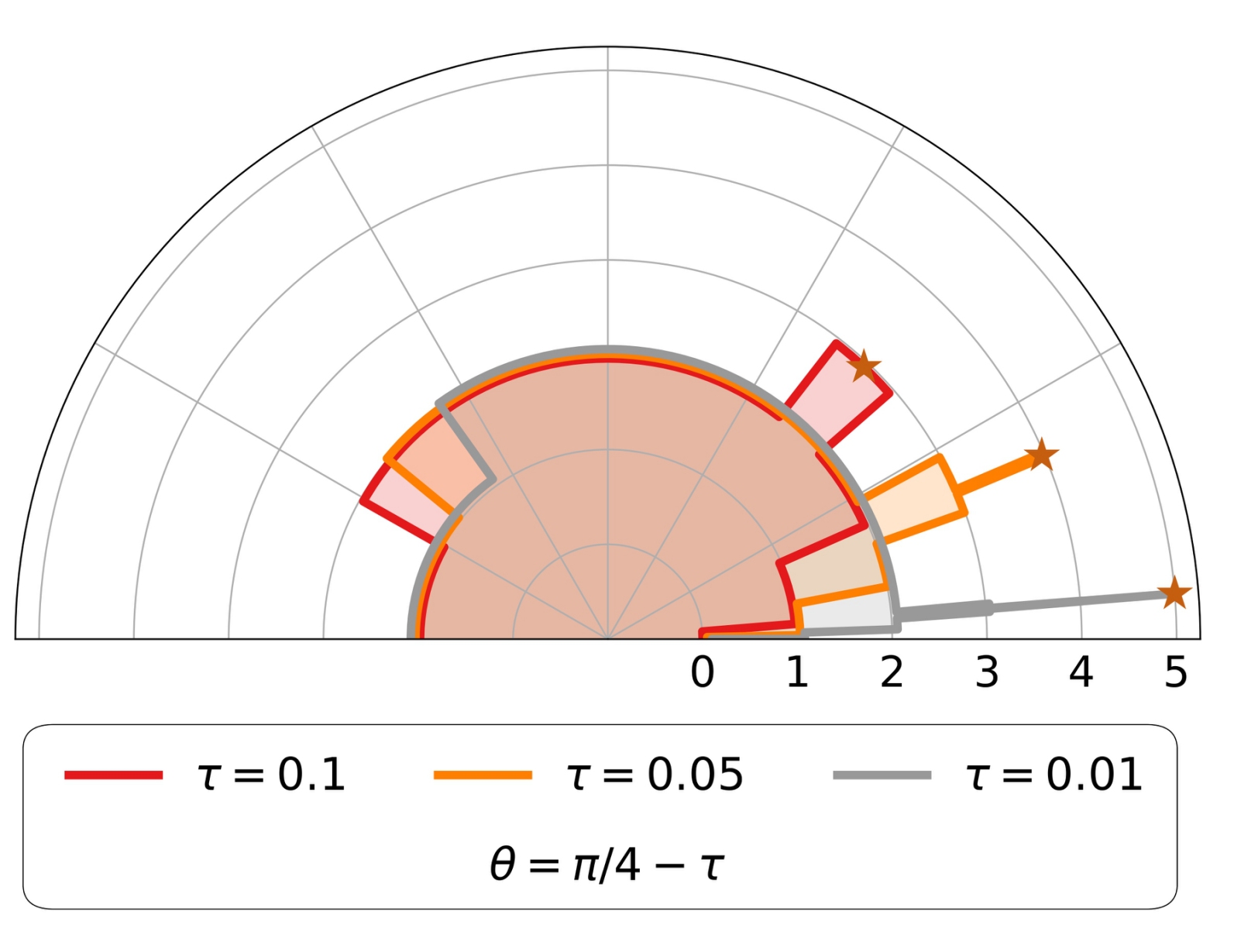}
\caption{\textbf{Achievable sharing rounds.} The figure employs a semicircular coordinate system where the angular position directly represents Bob’s measurement parameter $\delta$, ranging from $\delta = 0$ at the right endpoint to $\delta = \frac{\pi}{4}$ at the left endpoint. The contours are labeled from 1 to 5, indicating the value of sharing rounds. We map the nonlocality sharing capability for quantum states with different initial parameter $\theta$. The pentagram markers in the figure denote parameter configurations that satisfy the relation $2\theta + \delta = \frac{\pi}{2}$, which is constructed in the proof of Thm~\ref{theorem1}. Adherence to this relation guarantees, within our protocol, the achievement of the target sequential sharing round $k$ for a given entanglement resource $C>C(k)$.
}\label{fig:versus delta}
\end{figure}

In Fig.~\ref{fig:versus delta}, we present the sharing performance for quantum states with different initial entanglement resources, quantified by the concurrence $C=sin2\theta$. 
The numerical results support our theoretical findings. While the effectiveness of sharing does depend on Bob’s measurement setting $\delta$, the concurrence $C \in \bigl(C(k),\,1\bigr]$ ensures that $k$-round nonlocality sharing can always be successfully achieved.

However, the scenario changes distinctively at the exact lower boundary $C=C(k)$. If we attempt to maintain the full $n$-branch sharing at this critical point, the guarantee of Thm.~\ref{theorem1} vanishes.
Substituting  $C=2^{1-k} \sqrt{4^{k-1}-1}$ into the nonlocality inequality in Eq.~\eqref{eq:j-th-round-correlation-n-branches} for the $k$-th round yields:
\begin{equation}
S_n^{n,k}=2(\cos^2\delta P_k+\sin^2\delta)<2,
\end{equation}
which implies that the sharing fails. When resources are insufficient to support the ideal scenario, one is compelled to navigate a strategic compromise: either maintaining full network breadth $n$ by accepting a reduced sharing depth, or conversely, preserving the target depth $k$ by sacrificing specific branches. A natural question then arises: How much must one dimension be sacrificed to preserve the other? We address this question in the following theorem by establishing an achievable trade-off relation between these two dimensions.

\begin{theorem}\label{theorem2}
    For any given integer $k$, suppose each source distributes an entangled pure state with concurrence $C=C(k)$. Then there exists a protocol for which the achievable sharing chains $m$ and sharing rounds $j$ satisfy the relation $m+j=n+k-1$.
\end{theorem}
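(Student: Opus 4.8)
The plan is to reduce the mixed-configuration inequality to a single per-branch product and then track, round by round, how the untouched arms pay for the extra depth on the recycled ones. Since the sources are independent and Bob's observables $B_0,B_1$ are tensor products across the $n$ arms, both $I_n$ and $J_n$ factorize, so that $\sqrt[n]{|I_n|}=\prod_i|I^{(i)}|^{1/n}$ and likewise for $J_n$. With $m$ recycled arms carried to generation $j$ and the remaining $n-m$ arms held at the first generation and measured sharply at the end, this gives $S_n^{m,j}=2\,R(j)^{m/n}R_{\mathrm{pres}}^{(n-m)/n}$, where $R(j)$ is the per-branch correlator of a recycled arm at round $j$ (the quantity whose $m=n$ value is recorded in \eqref{eq:j-th-round-correlation-n-branches}) and $R_{\mathrm{pres}}=\cos\delta\sin2\theta+\sin\delta$ is the fresh, sharply measured value. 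The violation condition $S_n^{m,j}>2$ thus becomes $R(j)^{m}R_{\mathrm{pres}}^{\,n-m}>1$, which is the single inequality I would analyze.

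I would then specialize to the critical resource $C=C(k)$. Writing $C(k)=2^{1-k}\sqrt{4^{k-1}-1}$ as $\cos2\theta=2^{1-k}$ and keeping the setting $\delta+2\theta=\pi/2$ from the proof of Theorem~\ref{theorem1} (so $\sin\delta=\cos2\theta=2^{1-k}$ and $\cos\delta=\sin2\theta$), both $R(j)$ and $R_{\mathrm{pres}}$ acquire clean closed forms. In particular the $\alpha_k$-dependence of the recycled correlator cancels exactly at $j=k$, reproducing the collapse $S_n^{n,k}=2(\cos^2\delta\,P_k+\sin^2\delta)\le2$ already noted in the text: at full breadth the chain stalls at depth $k-1$, with the round-$k$ factor pinned at (or below) the marginal value $1$, while $R_{\mathrm{pres}}>1$ is a strict surplus. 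This is the seed of the trade-off: each arm kept fresh contributes a multiplicative surplus $R_{\mathrm{pres}}$ that can be spent to push the recycled arms one generation past where full breadth fails.

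Next I would give the explicit protocol and verify it. On the $m$ recycled arms I would run the PPM with the coin sequence $\{\alpha_j\}$ of \eqref{eq:sequence-probability-combined} (gentle coins, $\alpha_1\to0^+$, so that the $\sigma_x$-degradation factor $P_j$ stays near $1$) together with a sharp terminal measurement, and on the $n-m$ preserved arms a single sharp measurement. The deepest round is the binding one, since $R(j)$ is monotonically decreasing in $j$ and every earlier round carries a strictly larger factor; hence it suffices to check $R(j)^{m}R_{\mathrm{pres}}^{\,n-m}>1$ at the target depth. I would show that, once the surplus from the $n-m$ fresh arms is inserted, the round-by-round threshold \eqref{eq:violation value of alpha_j} — whose denominator $\sin\delta\,(1-2^{\,j-1}\cos2\theta)$ changes sign precisely at $j=k$ — is pushed out by exactly one generation per preserved arm, so that a strict violation survives up to $j=(k-1)+(n-m)$, i.e. $m+j=n+k-1$. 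The intermediate rounds $j'<j$ then hold automatically.

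The hard part is making the ``one generation per fresh arm'' bookkeeping exact rather than merely qualitative: the preserved arms enter through the geometric mean $R_{\mathrm{pres}}^{(n-m)/n}$ while the extra depth enters through the ratio $R(j{+}1)/R(j)$, and these only balance after the correct choice of $\delta$ and the correct normalization of the coin sequence at the critical concurrence. I would handle this by working at the marginal point of \eqref{eq:violation value of alpha_j}, where the recycled factor is pinned and its $\alpha_j$-dependence drops, and showing that the logarithmic surplus $\tfrac{n-m}{n}\ln R_{\mathrm{pres}}$ matches the accumulated per-round deficit down to generation $(k-1)+(n-m)$. Carrying the recursion for $P_j$ through this cancellation, and checking that the resulting $\{\alpha_j\}$ remains a valid probability sequence in $(0,1)$ as in Appendix~\ref{appendixA}, is where the real work lies.
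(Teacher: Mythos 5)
Your skeleton --- the product form $S_n^{m,j}=2\,R(j)^{m/n}R_{\mathrm{pres}}^{(n-m)/n}$ with $R_{\mathrm{pres}}=\cos\delta\sin2\theta+\sin\delta>1$, the critical-point identities $\cos2\theta=\sin\delta=2^{1-k}$, and the cancellation of the $\alpha_k$-dependence at $j=k$ --- is exactly Lemma~\ref{lemma4} and the paper's two anchor computations, and for $(m,j)=(n,k-1)$ and $(n-1,k)$ your plan coincides with the paper's proof. The gap is in the step you defer as ``the real work'': there is no exact ``one generation per preserved arm'' balance to establish. For $j\ge k$ the recycled factor obeys $R(j)\le\cos^2\delta\,P_j+\sin^2\delta<1$ yet tends to $1$ as $\alpha_1\to0^+$ (since $P_j\to1$), whereas $R_{\mathrm{pres}}^{(n-m)/n}$ is a constant strictly greater than $1$. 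Hence a single preserved arm already gives $R(j)^{m/n}R_{\mathrm{pres}}^{(n-m)/n}>1$ for \emph{every} $j$ in that limit: the per-round deficit can be made arbitrarily small while the surplus cannot, so the claimed matching of $\tfrac{n-m}{n}\ln R_{\mathrm{pres}}$ against an accumulated deficit ``down to generation $(k-1)+(n-m)$'' is false as an identity and cannot be the engine of the proof. The inequality you actually need holds for this much cruder reason; an attempt to verify the exact balance would stall. (This also shows $m+j=n+k-1$ is an achievable point rather than a tight frontier, which is all the theorem asserts and all the paper verifies --- and the paper itself only checks the cases $m=n$ and $m=n-1$.)

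A second concrete defect is the protocol specification for rounds $j\ge k$. You propose to keep using the coin sequence of \eqref{eq:sequence-probability-combined}, but at $C=C(k)$ its denominator $\sin\delta\,(1-2^{j-1}\sin\delta)$ vanishes at $j=k$ and is negative for $j>k$, returning $\alpha_k=\infty$ and $\alpha_j<0$ --- not probabilities. The reason is the sign flip you yourself observe in \eqref{eq:violation value of alpha_j}: the coefficient of $\alpha_j$ in $R(j)$ is $\sin\delta\,(2^{1-j}-\cos2\theta)$, positive for $j<k$ and negative for $j>k$, so beyond round $k$ the violation demands $\alpha_j$ \emph{small} rather than large. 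You must prescribe the late-round coins separately (any choice in $(0,1)$ tending to $0$ together with $\alpha_1$ suffices) and then confirm that the early rounds, which still use the increasing sequence, remain violated --- at which point the argument closes, but not by the balancing mechanism you describe.
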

\begin{proof}
We employ the same star-network scenario established in Thm~\ref{theorem1} and fix the source concurrence at the critical threshold $C=2^{1-k} \sqrt{4^{k-1}-1}$. We proceed to demonstrate the trade-off by first verifying the achievable limit for the full 
$n$ branches and then showing that reducing the breadth to $n-1$ enables the target depth $k$.
 For analytical clarity, we maintain the measurement setting $2\theta+\delta=\pi/2$. 
 
Under the setting of sharing across $m=n$ branches, the nonlocality equality for the $j$-th round $(1\leqslant j\leqslant k-1)$ follows the form derived in Eq.\eqref{eq:j-th-round-correlation-n-branches}:
\begin{align}
S_{n}^{n,j}=2\left(\cos^2\delta P_j+\sin\delta\frac{\alpha_j}{2^{j-1}}+\sin^2\delta(1-\alpha_j)\right).
\end{align}
Since the fixed resource $C(k)$ is strictly greater than $C(k-1)$ the threshold required for $k-1$ round. Theorem\ref{theorem1} guarantees the existence of a valid probability sequence $\{\alpha_1,\cdots,\alpha_{k-1}\}$ that satisfies $S_{n}^{n,j}>2$. 

We now demonstrate that reducing the number of shared branches to $m=n-1$ enables the extension of the sharing depth to the $k$-th round. The Bell parameter $S_{n}^{n-1,j}$ for the $j$-th round, derived in Appendix~\ref{appendixB}, is given by
\begin{align}
S_{n}^{n-1,j}&=2(\cos^2\delta+\sin\delta)^\frac{1}{n}\nonumber\\
&\left(\cos^2\delta P_j+\sin\delta\frac{\alpha_j}{2^{j-1}}+\sin^2\delta(1-\alpha_j)\right)^\frac{n-1}{n}.
\end{align}
To ensure successful sharing up to round $k-1$, the requirement on the measurement probability becomes
\begin{equation}
\alpha_j >
\frac{2^{\,j-1}\Bigl((\cos^2\delta+\sin\delta)^{-1/(n-1)}
-\cos^2\delta\,P_j -\sin^2\delta\Bigr)}
{\sin\delta\left(1-2^{\,j-1}\sin\delta\right)}.
\end{equation}
Since the term $(\cos^2\delta+\sin\delta)^{-1/(n-1)} \leqslant 1$, this condition imposes a less stringent constraint than the bound derived in Thm.~\ref{theorem1} for the full network. This relaxation arises from the reduced correlation demand on the central node. Consequently, the recursive sequence $\{\alpha_j\}$ defined in Eq.~\eqref{eq:sequence-probability-combined} readily satisfies this relaxed constraint. At the critical resource threshold defined by $\sin 2 \theta=2^{1-k} \sqrt{4^{k-1}-1}$, the relation $\sin\delta = 2^{k-1}\sin^2\delta$ holds. Under this relation, the expression for the final $k$-th round simplifies to
\begin{align}
&S_{n}^{n-1,k}=2(\cos^2\delta+\sin\delta)^\frac{1}{n}\left(\cos^2\delta P_k+\sin^2\delta\right)^\frac{n-1}{n}.
\end{align}
By choosing a sufficiently small initial measurement probability $\alpha_1$ to ensure $P_k \to 1$, the expression reduce to
\begin{equation}
\lim_{\alpha_1\to 0}S_{n}^{n-1,k}
=
2(\cos^2\delta+\sin\delta)^{\frac{1}{n}}>2, 
\quad
\delta
\in
(0,\frac{\pi}{2}).     
\end{equation}
Hence, nonlocality sharing is successfully achieved up to the $k$-round.
\end{proof}

\begin{figure}[ptb]
\includegraphics[width=0.45\textwidth]{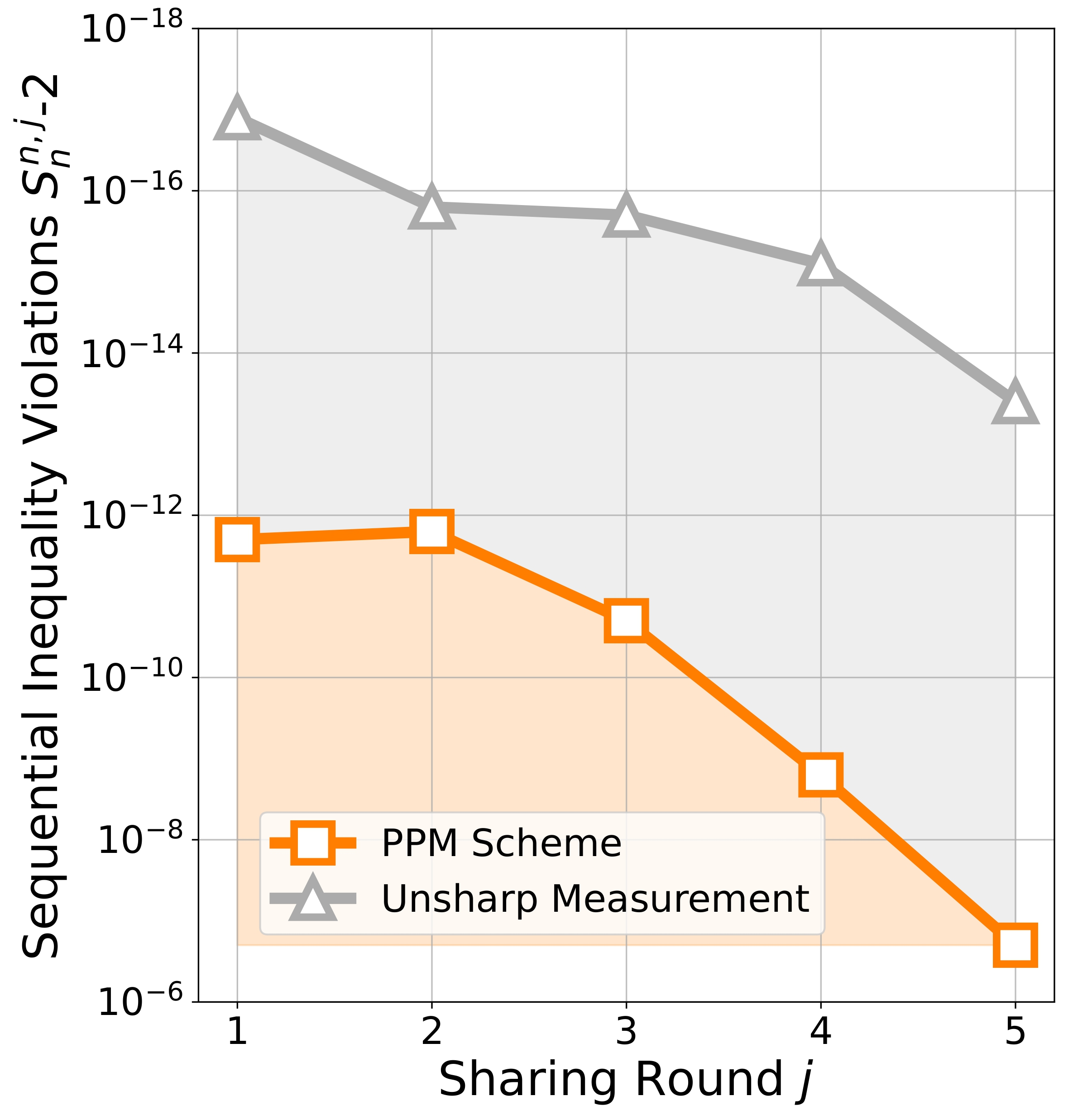}
\caption{\textbf{Comparison with unsharp measurements.} The orange line corresponds to the PPM protocol, while the gray line represents the unsharp measurement protocol. 
For the same sharing depth, the PPM protocol yields consistently stronger nonlocality violations across all rounds, indicating superior experimental detectability.}\label{fig:compare with unsharp}
\end{figure}

\section{Compared with unsharp measurement}
Weak measurements constitute a common approach to sequentially recycling quantum correlations, as the reduced interaction strength helps limit state disturbance for later observers~\cite{PhysRevLett.125.090401,PhysRevA.105.042436,PhysRevA.106.052412,Zhang2023}. In practice, however, the Bell inequality violations generated by such measurements are typically weak, which makes the result vulnerable to experimental imperfections. Beyond mere theoretical feasibility, the practical value of a sharing protocol critically hinges on its detectability. High detectability translates directly to operational benefits: it relaxes the required measurement precision and increases tolerance to typical experimental imperfections.
In this section, we numerically compare the detectability of the PPM protocol with that of unsharp measurements. 

We consider the same star-network setting introduced in the previous section where bob serves as the central node and simultaneously distributes to each Alice the entangled state $|\psi\rangle_{A_{i,1}B}=\cos\theta|00\rangle+\sin\theta|11\rangle$. Each node Alice$^{ij}$ chooses an input $x_{ij}\in\{0,1\}$ and performs either the projective measurement $A_0^{ij}=\sigma_z$ or the unsharp measurement $A_1^{ij}=\gamma_{ij}\sigma_x$, with $\gamma_{ij}$ quantifying the degree of disturbance. All Alices within the same generation employ the same unsharpness parameter
$\gamma_{1j}=\cdots=\gamma_{nj}=\gamma_j$. Bob chooses $y\in\{0,1\}$ and performs the projective measurements $B_0 = (\sin\omega\sigma_x + \cos\omega\sigma_z)^{\otimes n}$ and $B_1 = (-\sin\omega\sigma_x + \cos\omega\sigma_z)^{\otimes n}$.
 
Building on the criterion introduced in Ref.~\cite{Zhang2023}, the function $S^{n,j}_n$ can be expressed in terms of the two largest singular values $\lambda_1$ and $\lambda_2$ ($\lambda_1 > \lambda_2$) of the state’s correlation matrix~\cite{HORODECKI1995340}:
\begin{align}\label{eq:weak measurement}
&S^{n,j}_n\nonumber\\
=
&2^{2-j}\left( \gamma_j \sqrt{\lambda_2} \sin \omega + \sqrt{\lambda_1} \cos \omega \prod_{l=1}^{j-1} \left( 1 + \sqrt{1 - \gamma_l^2} \right) \right).
\end{align}
For the quantum state $|\psi\rangle_{A_{i,1}B}$, we have $\lambda_1 = 1$ and $\lambda_2 = \sin 2\theta$. Substituting these values into the Eq.~\eqref{eq:weak measurement} 
yields the following condition for violation:
\begin{equation}
\gamma_j > \frac{ 2^{j-1} - \cos \omega \prod_{l=1}^{j-1} \left( 1 + \sqrt{1 - \gamma_l^2} \right) }{ \sqrt{\sin 2\theta}  \sin \omega }.
\end{equation}
To ensure a rigorous comparison, we establish the violation thresholds for both schemes using a unified auxiliary parameter $\epsilon > 0$. The measurement parameters for the unsharp protocol are set as:
\begin{equation}\label{eq:gamma}
\gamma_j = (1 + \epsilon)\frac{ 2^{j-1} - \cos \omega \prod_{l=1}^{j-1} \left( 1 + \sqrt{1 - \gamma_l^2} \right) }{ \sqrt{\sin 2\theta} \sin \omega },
\end{equation}
while the corresponding probabilities for the PPM protocol follow from Eq.~\eqref{eq:violation value of alpha_j}:
\begin{equation}\label{eq:alpha}
\alpha_j = (1 + \epsilon) \frac{ 2^{j-1} \left( 1 - \sin^2 2\theta P_j - \cos^2 2\theta \right) }{ \cos 2\theta - 2^{j-1} \cos^2 2\theta }.
\end{equation}

With these analytical forms established,  we compare the detectability of the two protocols by evaluating the magnitude of violation per round at the same maximal sharing depth.  For consistency, we fix the initial state parameter to $\theta=\pi/4 - 0.01$ and the auxiliary constant to $\epsilon=10^{-2}$  in both scenarios. Specific measurement parameters are set to $\alpha_1=10^{-10}$ for PPM protocol, while for the weak-measurement protocol, we adopt the setting $\omega=(\pi/4)\times10^{-7}$ from Ref.~\cite{Zhang2023}.
As illustrated in Fig.~\ref{fig:compare with unsharp}, although both approaches successfully achieve a sharing depth of $k=5$, the PPM protocol delivers a consistently stronger violation in each round. This enhanced signal directly improves measurement visibility, offering a clear experimental advantage for verifying network nonlocality.

\begin{figure}[ptb]
\includegraphics[width=0.48\textwidth]{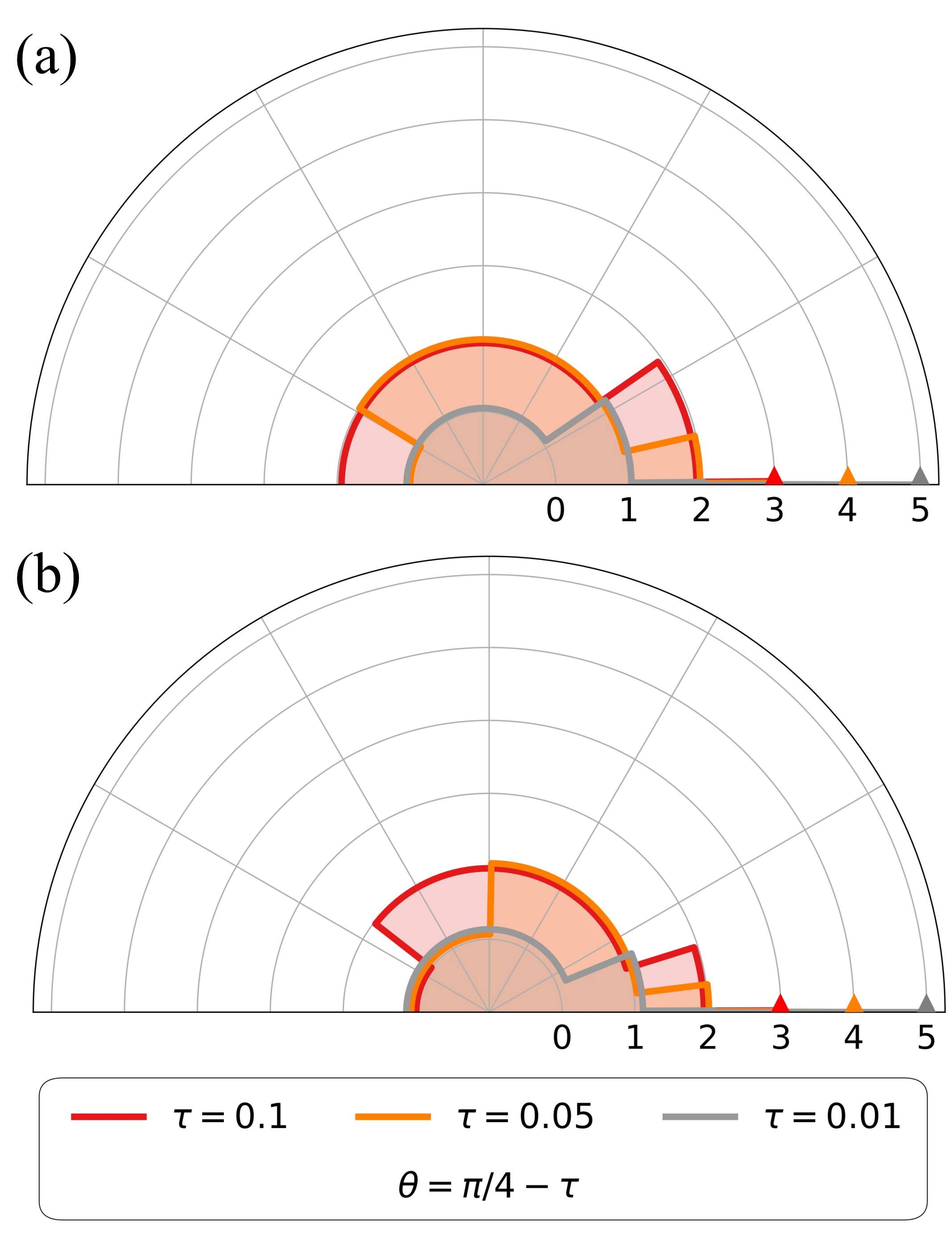}
\caption{\textbf{Sharing capability under noises.} The angular coordinate represents the noise strength, with zero noise at the right endpoint and increasing toward the left. Contour lines labeled from 1 to 5 indicate the maximum number of achievable sharing rounds $k$ for different initial states. Panel (a) covers depolarizing noise, with the noise strength ranging from 0  to 0.1 . Panel (b) corresponds to amplitude damping noise, with the damping parameter ranging from 0 to 0.3. The mapping illustrates how the noise level influences the sharing robustness under varying initial entanglement.}\label{fig:noise}
\end{figure}
\section{network nonlocality sharing with noise}
Until now, our analysis has assumed ideal pure entangled states, an idealized yet common assumption in quantum information processing. In realistic settings, however, environmental coupling and imperfections in state preparation inevitably introduce noise. To obtain a complete description of the protocol under such conditions, we derive the explicit measurement probabilities governing nonlocality violation in the presence of noise. This allows us to establish relations between the noise parameters and the achievable number of sequential sharing rounds. 

We first consider the case where each initial bipartite entangled source is subjected to independent depolarizing noise. The resulting mixed state shared between Bob and the $i$-th Alice becomes
\begin{equation}
\rho_{A_{i,1}B}
=
(1-p)|\psi\rangle\langle\psi|+p\frac{\mathbbm{1}}{4},
\quad 
i
=
1\cdots,n,
\end{equation}
with $|\psi\rangle=\cos\theta|00\rangle+\sin\theta|11\rangle$. 
Assuming uniform noise strength $p$ across all branches, the violation condition Eq.~\eqref{eq:violation-under-depolarizing-noise} establishes a direct functional relationship between the required measurement probabilities and the noise parameter. Specifically, for the protocol to sustain sequential sharing up to the $k$-th round, the probabilities $\{\alpha_j\}$ can be constructed as 
\begin{align}
\alpha_j
=
(1+\epsilon)\frac{2^{j-1}\left((1-p)^{-1}-\sin^22\theta P_j-\cos^22\theta\right)}{\cos2\theta-2^{j-1}\cos^22\theta}.
\end{align}

For our second scenario, we use the amplitude damping noise. In this setting, the entangled states initially shared between Bob and the Alices are described by
\begin{align}
\rho_{i}
=
\sum_j
\left(I \otimes K_{ij}\right)
|\psi\rangle\langle\psi|
\left(I \otimes K_{ij}\right)^{\dagger},
\end{align}
where the Kraus operators
\begin{equation}
K_1=\left[\begin{array}{cc}
0 & \sqrt{p_i} \\
0 & 0
\end{array}\right], \quad K_2=\left[\begin{array}{cc}
1 & 0 \\
0 & \sqrt{1-p_i}
\end{array}\right] .
\end{equation}
Again assuming identical damping strength $p$ across all branches, the violation condition in Eq.~\eqref{eq:violation-under-damping-noise} directly relates the damping parameter to the measurement probabilities. For $\epsilon>0$, network nonlocality can be shared up to the $k$-th round provided that, for all $1 \le j \le k$, the probabilities are chosen as
\begin{align}
\alpha_j=(1+\epsilon)\frac{2^{j-1}\left(1-\sqrt{1-p}\sin^22\theta P_j-\cos^22\theta\right)}{(1-2p\sin^2\theta)\cos2\theta-2^{j-1}\cos^22\theta}.
\end{align}
We further quantify the impact of noise on the maximum achievable number of sequential sharing rounds through numerical simulations. 
Fig.~\ref{fig:versus delta} serves as the noiseless benchmark. 
Accordingly, we choose $2\theta+\delta=\pi/4$ and set $\epsilon=10^{-10}$, which maximize the sharing depth in the absence of noise.
Fig.~\ref{fig:noise} then shows how the number of achievable sharing rounds is affected by the noise strength. In contrast to the noiseless case, we find that although pure states with higher initial entanglement yield a larger sharing depth without noise, they become more susceptible to noise and can be outperformed by states with lower initial entanglement once noise is present.

\section{conclusion}
In this work, we investigated the relationship between entanglement resources and the capacity for nonlocality sharing in quantum networks, focusing on a protocol based on probabilistic projective measurements. We established an entanglement threshold $C(k)$ associated with a target sharing depth $k$. When the available resources exceed this threshold, sharing across all branches for arbitrarily many rounds $k$ becomes achievable. This provides a clear  bound, demonstrating that the trade-off between sharing depth and network breadth can be completely avoided when sufficient resources are supplied. At the threshold, however, an explicit trade-off emerges between the number of sharable branches $m$ and the number of achievable sharing rounds $j$, characterized by the relation $m + j = n + k - 1$.

In addition to probabilistic projective measurements, weak measurements provide another approach for recycling nonlocal resources. To assess the practical feasibility of the two protocols, we compared their performance and found that PPM consistently produces stronger nonlocality violations across all sequential rounds for the same maximal sharing depth. This enhanced violation strength significantly improves experimental detectability. We further extended the protocol to depolarizing and amplitude-damping noise models, deriving measurement probabilities that depend explicitly on the noise parameters, thereby constructing a complete sharing framework applicable under realistic noise conditions. Our numerical results reveal that, although highly entangled pure states exhibit superior sharing performance in the absence of noise, they may become more fragile than lower-entanglement states once realistic noise is taken into account.

Looking forward, several directions merit further exploration. Extending the current protocol to more general network topologies beyond the star topology could reveal new scaling behaviors and resource trade-offs. Introducing the framework of quantum resource theory into the analysis of nonlocality sharing could further clarify the fundamental limits of quantum correlation reuse. For instance, although we have provided a complete protocol that supports $k$-round sharing under realistic noise, the quantitative relationship between available entanglement and achievable performance in noisy settings remains to be fully characterized. A quantum resource theory viewpoint may illuminate this connection. Such an approach would also enable the integration of nonlocality sharing with additional quantum resources, such as coherence, quantum discord, or magic states, opening the door to enhanced capabilities for distributed quantum information processing.

\section{Acknowledgments}
 We would like to thank Chenghong Zhu, Tianfeng Feng, Maosheng Li   and Ya Xi for fruitful discussions.  This research is supported by the National Natural Science Foundation of China under Grant No. 12501639, Key Lab of Guangzhou for Quantum Precision Measurement under Grant No. 202201000010, and the Fundamental Research Funds for the Central Universities under Grant No. 21624351. Yuqi Li and Haitao Ma are supported by the Fundamental Research Funds for the Central Universities (Grant Nos. 3072025CFJ2407, 3072025YC2401, 3072025YC2402 and 3072025YC2406).

\bibliography{Bib}
\onecolumngrid

\appendix
\section{Construction of sharing protocol}\label{appendixA}
In this section, we provide a rigorous construction of the probability sequences $\{\alpha_j\}_j$ employed in our PPM protocol. The construction is based on two lemmas that guarantee the existence of strictly increasing probabilistic sequences satisfying the constraints necessary for sharing network nonlocality across multiple rounds. Lemma~\ref{lemma1} deals with the case of non‑maximally entangled sources, while Lemma~\ref{lemma2} addresses maximally entangled sources. The detailed derivations and proofs are presented below:

\begin{lemma}\label{lemma1}
For any arbitrary $k \geqslant 2$, and pure entangled two-qubit states, characterized by the concurrence, $2^{1-k} \sqrt{4^{k-1}-1}<C=\sin 2 \theta<1$, there exists a increasing sequence $\left\{\alpha_1, \alpha_2, \cdots, \alpha_k\right\}$ satisfying
$$
\alpha_1>0\quad \text{and} \quad
0<\frac{2^{j-1} \cos ^2 \delta\left(1-P_j\right)}{\sin \delta\left(1-2^{j-1} \sin \delta\right)}<\alpha_j<1, \quad \forall 2 \leq j \leq k
$$
where $\delta=\frac{\pi}{2}-2\theta$ and $P_j=\prod_{l=1}^{j-1}\frac{2-\alpha_l}{2}$.
\end{lemma}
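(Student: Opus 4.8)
The plan is to exhibit the recursively defined sequence of Eq.~\eqref{eq:sequence-probability-combined} (with a fixed auxiliary constant $\epsilon>0$) as the required witness, and then verify the three conditions in turn: the left inequality, strict monotonicity, and the upper bound $\alpha_j<1$. The first move is a reparametrization that unlocks the hypothesis. Since $\delta=\tfrac{\pi}{2}-2\theta$ we have $\cos\delta=\sin2\theta=C$ and $\sin\delta=\cos2\theta=\sqrt{1-C^{2}}>0$. Squaring the hypothesis $C>2^{1-k}\sqrt{4^{k-1}-1}$ gives $C^{2}>1-4^{1-k}$, hence $\sin^{2}\delta<4^{1-k}=(2^{1-k})^{2}$, i.e.
\[
\sin\delta<2^{1-k}\quad\Longleftrightarrow\quad 2^{\,j-1}\sin\delta<1\ \text{ for all }1\le j\le k .
\]
This is the essential consequence of the concurrence bound: it makes every denominator $\sin\delta\,(1-2^{\,j-1}\sin\delta)$ appearing in $L_j:=\dfrac{2^{j-1}\cos^{2}\delta\,(1-P_j)}{\sin\delta\,(1-2^{j-1}\sin\delta)}$ strictly positive, so the recursion $\alpha_j=(1+\epsilon)L_j$ is well defined, with $P_j=\prod_{l<j}(1-\alpha_l/2)$.

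For the left inequality, observe that as long as each $\alpha_l\in(0,2)$ the factors $1-\alpha_l/2$ lie in $(0,1)$, so $P_j\in(0,1)$ and therefore $L_j>0$; since $\epsilon>0$ this yields $\alpha_j=(1+\epsilon)L_j>L_j>0$ for free. For monotonicity, I form the exact ratio for $2\le j\le k-1$,
\[
\frac{\alpha_{j+1}}{\alpha_j}=2\cdot\frac{1-2^{\,j-1}\sin\delta}{1-2^{\,j}\sin\delta}\cdot\frac{1-P_{j+1}}{1-P_j},
\]
in which both correction factors exceed $1$: the first because $0<1-2^{j}\sin\delta<1-2^{j-1}\sin\delta$, and the second because $P_{j+1}=P_j(1-\alpha_j/2)<P_j$. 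Hence $\alpha_{j+1}/\alpha_j>2$. The base step $j=1$ is treated separately using $1-P_2=\alpha_1/2$, which gives $\alpha_2=(1+\epsilon)\frac{\cos^{2}\delta}{\sin\delta(1-2\sin\delta)}\,\alpha_1$; the coefficient exceeds $1$ because $\cos^{2}\delta-\sin\delta(1-2\sin\delta)=\sin^{2}\delta-\sin\delta+1>0$ (its discriminant is $-3$). Thus the sequence is strictly increasing regardless of the size of $\alpha_1$.

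The upper bound $\alpha_j<1$ is the \emph{main obstacle}, since the monotonicity estimate gives no ceiling on the terms — boundedness must instead be forced by taking $\alpha_1$ small. The key is that every $\alpha_j$ is actually of order $\alpha_1$: because $1-P_j=1-\prod_{l<j}(1-\alpha_l/2)\le\tfrac12\sum_{l<j}\alpha_l$, each term is controlled by the partial sums of the earlier ones. I would introduce the $\alpha_1$-independent constants $D_j:=(1+\epsilon)\frac{2^{j-1}\cos^{2}\delta}{2\sin\delta(1-2^{j-1}\sin\delta)}$ and the linear majorant $K_1=1,\ K_j=D_j\sum_{l<j}K_l$, then prove by a finite strong induction that $\alpha_j\le K_j\alpha_1$ for all $1\le j\le k$: the inductive hypothesis makes each earlier $\alpha_l\le K_l\alpha_1<1<2$, which legitimizes the product estimate and closes the bound without circularity. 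Choosing $\alpha_1<\bigl(\max_{1\le j\le k}K_j\bigr)^{-1}$ then forces every $\alpha_j<1$. Combined with the two previous steps, the constructed sequence is strictly increasing and satisfies $0<L_j<\alpha_j<1$ for $2\le j\le k$ together with $\alpha_1>0$, which is exactly the assertion; the role of letting $\alpha_1\to0$ is precisely to guarantee that this admissible window is non-empty.
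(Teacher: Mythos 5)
Your argument is correct. Note first that the paper does not actually prove Lemma~\ref{lemma1}; it defers to Ref.~\cite{PhysRevLett.133.170201}, so the only in-paper point of comparison is the proof of the sibling Lemma~\ref{lemma2}. Your witness is the same sequence as Eq.~\eqref{eq:sequence-probability-combined}, and your monotonicity step --- the ratio $\alpha_{j+1}/\alpha_j$ written as $2$ times two correction factors each exceeding $1$, with the base case $\alpha_2/\alpha_1$ handled separately via $1-P_2=\alpha_1/2$ --- mirrors the paper's treatment of Lemma~\ref{lemma2} almost verbatim. Where you genuinely improve on the paper's style is the upper bound: the paper only observes that each $\alpha_j$ is a polynomial in $\alpha_1$ vanishing at $\alpha_1=0$ and then ``takes the limit $\alpha_1\to0^+$'', which establishes existence but gives no admissible window; your linear majorant $K_j$ with the strong induction $\alpha_j\le K_j\alpha_1$ converts this into an explicit, non-circular threshold $\alpha_1<(\max_{j\le k}K_j)^{-1}$, and you correctly flag that the Weierstrass-type estimate $1-P_j\le\tfrac{1}{2}\sum_{l<j}\alpha_l$ requires the earlier $\alpha_l\le 2$, which the induction itself supplies. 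Your extraction of the role of the concurrence hypothesis --- $C^2>1-4^{1-k}$ is exactly $2^{j-1}\sin\delta<1$ for all $j\le k$, which keeps every denominator positive --- is the right key observation. Two points worth stating explicitly in a final write-up: you implicitly take $\theta\in(0,\pi/4)$ so that $\sin\delta=\cos2\theta>0$, and for $k\ge 2$ the hypothesis already gives $\sin\delta<2^{1-k}\le\tfrac{1}{2}$, which is what makes the base-case denominator $1-2\sin\delta$ positive.
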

The proof of this lemma can be found in Ref.~\cite{PhysRevLett.133.170201}. 


\begin{lemma}\label{lemma2}
For any given $k\geqslant 2$, there exists a parameter $\delta$ and an increasing sequence $\left\{\alpha_1, \alpha_2, \cdots, \alpha_k\right\}$ satisfying 
\begin{equation}
 \alpha_1>0 \quad \text{and} \quad  0<\frac{2^{j-1}(1 - \cos\delta \cdot P_j)}{\sin\delta}<\alpha_j<1, \quad \forall 2 \leq j \leq k,
\end{equation}
where $P_j=\prod_{l=1}^{j-1}\frac{2-\alpha_l}{2}$.
\end{lemma}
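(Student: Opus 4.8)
The plan is to mirror the recursive construction used for Lemma~\ref{lemma1}, exploiting the one additional degree of freedom available here: since we are no longer tied to the relation $\delta=\frac{\pi}{2}-2\theta$, the angle $\delta$ becomes a free parameter that we may drive to zero. Writing $L_j=\frac{2^{j-1}(1-\cos\delta\,P_j)}{\sin\delta}$ for the lower bound appearing in the statement, I would fix a small slack $\epsilon>0$ and define the sequence recursively by $\alpha_j=(1+\epsilon)L_j$ for $2\le j\le k$, leaving $\alpha_1>0$ as a free small parameter; since $P_j$ depends only on $\alpha_1,\dots,\alpha_{j-1}$, this is a well-posed recursion. The positivity $L_j>0$ is immediate: each factor $\tfrac{2-\alpha_l}{2}\in(\tfrac12,1)$ gives $0<P_j\le 1$, so $\cos\delta\,P_j<1$ and the numerator is strictly positive.

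The essential new feature compared with Lemma~\ref{lemma1} is a ``floor'' term: even at $P_j=1$ one has $L_j\ge \frac{2^{j-1}(1-\cos\delta)}{\sin\delta}=2^{j-1}\tan(\delta/2)$, which does not vanish as $\alpha_1\to0$. The observation that rescues the construction is that this floor is controlled by $\delta$ alone, and $\tan(\delta/2)\to0$ as $\delta\to0$. I would therefore take the joint limit $\delta\to0$ and $\alpha_1\to0$ and prove by induction on $j$ that the recursively generated $\alpha_j$ remain in $(0,1)$. For the induction it is convenient to record the recursion obtained from $P_{j+1}=P_j(1-\alpha_j/2)$, namely
\begin{equation}
L_{j+1}=2L_j+\frac{2^{j-1}\cos\delta\,P_j}{\sin\delta}\,\alpha_j ,
\end{equation}
which, since every term on the right is positive, immediately yields $L_{j+1}>2L_j>L_j$, hence the strict monotonicity $\alpha_1<\alpha_2<\cdots<\alpha_k$ (the first step $\alpha_1<\alpha_2$ being secured by taking $\alpha_1$ small against the floor of $L_2$).

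The main obstacle is boundedness, i.e.\ showing $\alpha_j<1$ for every $j\le k$. The recursion above amplifies by the large factor $\tfrac{2^{j-1}}{\sin\delta}$, so the bounds can stay below $1$ only if the product $P_j$ remains extremely close to $1$; quantitatively one needs the partial sums $\sum_{l<j}\alpha_l$ to stay below a threshold of order $2^{-j}\delta$. I would control this by the elementary estimate $1-P_j\le\tfrac12\sum_{l<j}\alpha_l$ and close the induction on the pair (value of $\alpha_j$, partial sum $\sum_{l\le j}\alpha_l$), choosing $\delta$ small as a function of $k$ and $\epsilon$ and then $\alpha_1$ small as a function of $\delta$. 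A clean way to package the limit is to regard $(\alpha_2(\delta,\alpha_1),\dots,\alpha_k(\delta,\alpha_1))$ as continuous functions of the two small parameters, show the limiting values lie in $(0,1)^{k-1}$, and invoke persistence of strict inequalities to obtain an open set of admissible $(\delta,\alpha_1)$. Balancing these competing scales---so that the geometric growth carried by $2^{j-1}/\sin\delta$ does not overwhelm the smallness supplied by $\delta\to0$---is the delicate point, and is precisely where the maximally entangled case departs from the non-maximal construction of Lemma~\ref{lemma1}, in which the analogous floor is instead proportional to the freely adjustable $\alpha_1$.
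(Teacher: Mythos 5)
Your construction is the same as the paper's: the recursive choice $\alpha_j=(1+\epsilon)L_j$, the observation that the residual floor $2^{j-1}\tan(\delta/2)$ must be suppressed by sending $\delta\to0$, and monotonicity via the positivity of the increment in $L_{j+1}=2L_j+\tfrac{2^{j-1}\cos\delta\,P_j}{\sin\delta}\alpha_j$ (a cleaner additive form of the paper's ratio computation). Positivity and monotonicity are fine. The genuine gap is exactly at the step you flag as delicate: the boundedness induction cannot be closed, because the two scales you are trying to balance are \emph{both} linear in $\delta$. You need $\sum_{l<j}\alpha_l\lesssim 2^{1-j}\sin\delta$ to keep $\alpha_j<1$, but the floor already forces $\alpha_2>L_2\geqslant 2\tan(\delta/2)$, hence $1-P_3\geqslant\alpha_2/2>\tan(\delta/2)$, and therefore, for $\delta\in(0,\pi/2)$ and $u=\tan(\delta/2)$,
\begin{equation}
L_3=4\tan\frac{\delta}{2}+\frac{4\cos\delta}{\sin\delta}\,\bigl(1-P_3\bigr)>4u+\frac{4\cos\delta}{1+\cos\delta}=2+2u(2-u)>2,
\end{equation}
using $\tan(\delta/2)/\sin\delta=1/(1+\cos\delta)$ and $\cos\delta/(1+\cos\delta)=(1-u^2)/2$; for $\delta\geqslant\pi/2$ one already has $L_2\geqslant 2/\sin\delta\geqslant 2$. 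Thus $\alpha_3>2$ is forced for every admissible $\delta$ and every $\alpha_1>0$, however the limits are taken or interleaved, so no choice of the two small parameters yields $\alpha_3<1$. Your proposed ``persistence of strict inequalities'' packaging fails because the limiting value of $\alpha_3$ as $\alpha_1,\delta\to0^+$ is $2(1+\epsilon)^2$, not $0$.

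This failure is not yours alone: the paper's own proof contains the same gap. It asserts that each $\alpha_j$, viewed as a polynomial in $\alpha_1$, has constant term $\tfrac{2^{j-1}(1+\epsilon)(1-\cos\delta)}{\sin\delta}$, which vanishes as $\delta\to0$; this is true for $j=2$ but false for $j\geqslant3$, where the constant term of $\alpha_2$ feeds back through $P_3$ and contributes $\tfrac{4(1+\epsilon)^2\cos\delta}{1+\cos\delta}\to2(1+\epsilon)^2$ to the constant term of $\alpha_3$. The computation above shows this is not a repairable slip in the write-up: for $k\geqslant3$ no pair $(\delta,\alpha_1)$ satisfies all the stated inequalities simultaneously, so Lemma~\ref{lemma2} as stated holds only for $k=2$. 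Any fix must modify the statement or the protocol in the maximally entangled case rather than merely tune the order of limits, so you should not expect to complete the argument along the lines you (and the paper) propose.
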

\begin{proof}
Let us consider the following sequence
 \begin{align}\label{eq:sequence-probability-2}
 \alpha_1>0 \quad \text{and} \quad  \alpha_j = (1+\epsilon) \frac{2^{j-1}(1 - \cos\delta \cdot P_j)}{\sin\delta}, \quad \text{for } 2\leqslant j\leqslant k.
\end{align}
Firstly, each $\alpha_j$ can be expressed as a polynomial in $\alpha_1$ whose constant term is $\frac{2^{j-1}(1+\epsilon)(1-\cos\delta)}{\sin\delta}$. This term vanishes in the limit $\delta \to 0$, thus guaranteeing that
\begin{align}
    \lim_{\alpha_1 \rightarrow 0^{+}} \alpha_j 
    =
    0\quad 
    \text{for all } j.
\end{align}
Then, we set $\frac{1-\cos\delta}{\sin\delta}<\alpha_1<\frac{2(1-\cos\delta)}{\sin\delta}$, it naturally follows that 
\begin{align}
 \frac{\alpha_2}{\alpha_1}&=(1+\epsilon)\left(\frac{\cos\delta}{\sin\delta}+\frac{2(1-\cos\delta)}{\alpha_1\sin\delta}\right)>(1+\epsilon)(1+\frac{\cos\delta}{\sin\delta})>1.
\end{align}
For the general case $j \geqslant  3$, the ratio is
\begin{align}
\frac{\alpha_j}{\alpha_{j-1}}&=2\cdot\frac{1-\cos\delta P_j}{1-\cos\delta P_{j-1}}\nonumber\\
&=2\cdot\frac{1-\cos\delta P_{j-1}\left(1-\frac{\alpha_{j-1}}{2}\right)}{1-\cos\delta P_{j-1}}\nonumber\\
&=2+\frac{\cos\delta P_{j-1}\alpha_{j-1}}{1-\cos\delta P_{j-1}}\nonumber\\
&>2+\frac{2^{j-1}P_{j-1}}{\sin\delta}\nonumber\\
&>1.
\end{align}
Therefore, by taking the limit as $\alpha_1,\delta\to 0^+$, we ensure that sequence $\{\alpha_1, \alpha_2, \cdots, \alpha_k\}$ is strictly increasing and remain within the interval $(0,1)$.   
\end{proof}

\section{Calculation of correlation function}\label{appendixB}

In this section, we provide a detailed derivation of the nonlocality inequality for the PPM protocol in the star network. We systematically compute the average post-measurement state after each sharing round and evaluate the resulting correlators between the central node and the peripheral observers. The explicit form of $S_n^{m,j}$ is obtained, which quantifies the violation of the network nonlocality inequality when $m$ branches participate in sequential sharing up to round $j$. 

\begin{lemma}\label{lemma3}
Let $\rho_{j} (j=2,3,\cdots)$ denotes the average state after all observers $\text{Alice}^{1,j-1}$, $\text{Alice}^{2,j-1}$, $\cdots$, $\text{Alice}^{m,j-1}$ have performed their probabilistic projective measurements and passed their respective post-measurement states to the subsequent observers $\text{Alice}^{1,j}$, $\text{Alice}^{2,j}$, $\cdots$, $\text{Alice}^{m,j}$, under the assumptions of uniform measurement bias $\alpha_{1j} = \cdots = \alpha_{mj} = \alpha_j$, $\rho_j$ has the form:

\begin{equation}\label{rho}
  \rho_{j}=\sum_{p+q=0}^m\sum_{\substack{\{\beta_1,\cdots,\beta_p \}\\\subset\{1,\cdots,m\}}}\sum_{\substack{\{\gamma_1\cdots \gamma_q\}\subset\\\{1,\cdots,m\}\backslash\{\beta_1,\cdots,\beta_p \}}}\frac{\alpha_{j-1}^{q}(3-\alpha_{j-1})^{m-p-q}}{2^{2m}}\mathcal{A}_0^{\beta_1\cdots\beta_p,j-1}\mathcal{A}_1^{\gamma_1\cdots\gamma_q,j-1}\rho_{j-1} \mathcal{A}_0^{\beta_1\cdots\beta_p,j-1}\mathcal{A}_1^{\gamma_1\cdots\gamma_q,j-1},
         \end{equation}   

where $\mathcal{A}_0^{\beta_1\cdots\beta_p,j-1}=A^{\beta_1,{j-1}}_{0}\otimes\cdots
\otimes A^{\beta_p,{j-1}}_{0}$ and $\mathcal{A}_1^{\gamma_1\cdots\gamma_q,j-1}=A^{\gamma_1,{j-1}}_{1}\otimes\cdots\otimes A^{\gamma_q,{j-1}}_{1}$.
\end{lemma}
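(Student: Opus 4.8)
The plan is to prove the claim by directly computing the completely positive, trace-preserving map that one full generation of the PPM protocol induces on $\rho_{j-1}$, exploiting the fact that the $m$ branches act on disjoint subsystems and therefore factorize. First I would isolate the action of a single observer $\text{Alice}^{i,j-1}$ on her own qubit, writing it in Kraus form; then I would assemble the $m$ independent single-branch channels into a tensor product, whose term-by-term expansion reproduces the stated multinomial sum. Since $\rho_j$ is defined as the state obtained from $\rho_{j-1}$ after one generation acts, this is a single-step transition formula and requires no induction on $j$.

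For the single-branch step, recall that the input bit $x_{i,j-1}\in\{0,1\}$ is chosen with probability $1/2$ each, and that a non-selective projective measurement of a $\pm1$ observable $\sigma$, with projectors $(\mathbbm{1}\pm\sigma)/2$, maps $\rho\mapsto\tfrac12(\rho+\sigma\rho\sigma)$. Conditioning on the input and the biased coin, the $x=0$ alternative contributes $\tfrac12(\rho+\sigma_x\rho\sigma_x)$, while the $x=1$ alternative contributes $\alpha_{j-1}\tfrac12(\rho+\sigma_z\rho\sigma_z)+(1-\alpha_{j-1})\rho$. Averaging the two inputs with weight $1/2$ yields the single-branch channel
\begin{equation}
\mathcal{E}(\rho)=\frac{3-\alpha_{j-1}}{4}\,\rho+\frac{1}{4}\,\sigma_x\rho\sigma_x+\frac{\alpha_{j-1}}{4}\,\sigma_z\rho\sigma_z,
\end{equation}
i.e. a three-term Kraus decomposition with weights $\tfrac{3-\alpha_{j-1}}{4}$ (the identity option), $\tfrac14$ (the $A_0=\sigma_x$ option), and $\tfrac{\alpha_{j-1}}{4}$ (the $A_1=\sigma_z$ option). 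These are exactly the three alternatives indexed in the lemma, and matching the $\rho$-coefficient against $\tfrac{3-\alpha_{j-1}}{4}$ confirms the uniform $1/2$ input weighting.

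Next I would express the full generation map as $\bigotimes_{i=1}^m\mathcal{E}_i$ acting on $\rho_{j-1}$ and expand the tensor product. Each of the $m$ factors independently selects one of its three Kraus options, so a generic term is specified by the subset $\{\beta_1,\dots,\beta_p\}$ of branches applying $\sigma_x$, the disjoint subset $\{\gamma_1,\dots,\gamma_q\}$ applying $\sigma_z$, and the remaining $m-p-q$ branches applying the identity. Because operators on distinct tensor factors commute, the per-branch two-sided sandwiches combine into the single product $\mathcal{A}_0^{\beta_1\cdots\beta_p,j-1}\mathcal{A}_1^{\gamma_1\cdots\gamma_q,j-1}\rho_{j-1}\mathcal{A}_0^{\beta_1\cdots\beta_p,j-1}\mathcal{A}_1^{\gamma_1\cdots\gamma_q,j-1}$, and its scalar weight is the product of $p$ factors of $\tfrac14$, $q$ factors of $\tfrac{\alpha_{j-1}}{4}$, and $m-p-q$ factors of $\tfrac{3-\alpha_{j-1}}{4}$, which equals exactly $2^{-2m}\alpha_{j-1}^{\,q}(3-\alpha_{j-1})^{m-p-q}$. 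Summing over all admissible $(p,q)$ and all disjoint subsets then gives Eq.~\eqref{rho}.

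The only genuine obstacle is the combinatorial bookkeeping of the expansion: one must verify that the nested subset sums $\sum_{\{\beta_1,\dots,\beta_p\}}\sum_{\{\gamma_1,\dots,\gamma_q\}}$ enumerate each Kraus term of $\bigotimes_i\mathcal{E}_i$ exactly once. This follows because assigning each index in $\{1,\dots,m\}$ one of the three labels ($\sigma_x$, $\sigma_z$, or identity) is in bijection with choosing the two disjoint subsets and declaring the complement to be the identity set, so no term is counted twice and none is omitted. Everything else is a routine verification that the scalar weights multiply correctly, so I expect the proof to be short once the single-branch channel is pinned down.
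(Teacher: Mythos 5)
Your proof is correct, and it reaches the stated coefficient $\alpha_{j-1}^{q}(3-\alpha_{j-1})^{m-p-q}/2^{2m}$ by a route that is organized quite differently from the paper's. The paper works globally on the $m$-qubit register: it first proves by induction on the number of measuring parties that a round of non-selective projective measurements expands $\rho$ into a sum over all operator strings with weight $2^{-m}$ per measured party, then mixes over the random inputs and coin outcomes, and finally regroups terms with a two-layer binomial resummation (the $\sum_{u,v}\binom{m-p-q}{u}\binom{m-p-q-u}{v}\cdots$ computation) to extract the net coefficient of each operator string. You instead collapse all of that into the single-qubit level: you identify the per-branch channel $\mathcal{E}(\rho)=\tfrac{3-\alpha_{j-1}}{4}\rho+\tfrac14\sigma_x\rho\sigma_x+\tfrac{\alpha_{j-1}}{4}\sigma_z\rho\sigma_z$ (where the scalar arithmetic $\tfrac14+\tfrac{\alpha}{4}+\tfrac{1-\alpha}{2}=\tfrac{3-\alpha}{4}$ is exactly the paper's binomial identity specialized to one qubit), and then obtain the lemma by expanding $\bigotimes_{i=1}^m\mathcal{E}_i$, using the bijection between three-way labelings of $\{1,\dots,m\}$ and pairs of disjoint subsets. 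This factorized presentation buys a much shorter argument — no induction on the number of parties and no multinomial bookkeeping — and makes trace preservation manifest; what the paper's global expansion buys is that its intermediate objects (the explicit sum over which measured parties contribute the identity versus the observable) are reused almost verbatim in the subsequent correlator calculation of Lemma~4. The one assumption you should state explicitly, as the paper does implicitly through its $2^{-m}$ prefactor, is that the inputs $x_{i,j-1}$ are averaged uniformly; with that in place your argument is complete.
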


\begin{proof}
 For an $n$-star quantum network described by the state $\rho = \bigotimes_{k=1}^n \rho_{A_kB}$, where each $\rho_{A_kB}$ represents a bipartite state shared between Bob and the $k$-th branch. When observers Alice$^{i}$ perform projective measurements $A_{\pm|x_i}^i=\frac{1}{2}(\mathbbm{1}\pm A_{x_i}^i),i=1\cdots m$,  on her respective qubit, we first prove that the resulting average state is given by
\begin{align}\label{b1}
\tilde{\rho}&=\sum_{r=0}^{m}\sum_{\substack{\{i_1,\cdots i_r\}\\\subset\{1,\cdots,m\}}}\mathcal{A}_{x_{i_1\cdots x_{i_r}}}^{i_1\cdots i_r}\rho\mathcal{A}_{x_{i_1,\cdots x_{i_r}}}^{i_1\cdots i_r},
\end{align}
where $\mathcal{A}_{x_{i_1\cdots x_{i_r}}}^{i_1\cdots i_r}=A_{x_{i_1}}^{i_1}\otimes A_{x_{i_2}}^{i_2}\otimes\cdots\otimes A_{x_{i_r}}^{i_r}$. This is proved via mathematical induction:
For $k=1$, the statement holds because
\begin{align}
\tilde{\rho}&=\frac{\mathbbm{1}+A^1_{x_1}}{2}\rho\frac{\mathbbm{1}+A^1_{x_1}}{2}+\frac{\mathbbm{1}-A^1_{x_1}}{2}\rho\frac{\mathbbm{1}-A^1_{x_1}}{2}=\frac{1}{2}\left(\rho+A^1_{x_1}\rho A^1_{x_1}\right).
\end{align}  
Assume that the statement holds for $k=m-1$, Then for $k=m$, we have
\begin{align}
    \tilde{\rho}=&\frac{1}{2^{m-1}}\Bigg[\frac{\mathbbm{1}+A^m_{x_m}}{2}A^1_{x_1}\cdots A^{m-1}_{x_{m-1}}\rho A^1_{x_1}\cdots A^{m-1}_{x_{m-1}}\frac{\mathbbm{1}+A^m_{x_m}}{2}+\frac{\mathbbm{1}-A^m_{x_m}}{2}A^1_{x_1}\cdots A^{m-1}_{x_{mm-1}}\rho A^1_{x_1}\cdots A^{m-1}_{x_{m-1}}\frac{\mathbbm{1}-A^m_{x_m}}{2}\nonumber\\
    &+\frac{\mathbbm{1}+A^m_{x_m}}{2}\sum_{\substack{\{i_1,\cdots, i_{m-2}\}\\\subset\{1,\cdots m-1\}}}A^{i_1}_{x_{i_1}}\cdots A^{i_{m-2}}_{x_{i_{m-2}}}\rho A^{i_1}_{x_{i_1}}\cdots A^{i_{m-2}}_{x_{i_{m-2}}}\frac{\mathbbm{1}+A^m_{x_m}}{2}\nonumber\\
    &+\frac{\mathbbm{1}-A^m_{x_m}}{2}\sum_{\substack{\{i_1,\cdots, i_{m-2}\}\\\subset\{1,\cdots m-1\}}}A^{i_1}_{x_{i_1}}\cdots A^{i_{m-2}}_{x_{i_{m-2}}}\rho A^{i_1}_{x_{i_1}}\cdots A^{i_{m-2}}_{x_{i_{m-2}}}\frac{\mathbbm{1}-A^m_{x_m}}{2}\nonumber\\
   &+\cdots+\frac{\mathbbm{1}+A^m_{x_m}}{2}\sum_{i_1=1}^{m}A^{i_1}_{x_{i_1}}\rho A^{i_1}_{x_{i_1}}\frac{\mathbbm{1}+A^m_{x_m}}{2}+\frac{\mathbbm{1}-A^m_{x_m}}{2}\sum_{i_1=1}^{m}A^{i_1}_{x_{i_1}}\rho A^{i_1}_{x_{i_1}}\frac{\mathbbm{1}-A^m_{x_m}}{2}\nonumber\\
   &+\frac{\mathbbm{1}+A^m_{x_m}}{2}\rho\frac{\mathbbm{1}+A^m_{x_m}}{2}+\frac{\mathbbm{1}-A^m_{x_m}}{2}\rho\frac{\mathbbm{1}-A^m_{x_m}}{2}\Bigg]\nonumber\\
   =&\frac{1}{2^m}\bigg[A^1_{x_1}A^2_{x_2}\cdots A^m_{x_m}\rho A^1_{x_1}A^2_{x_2}\cdots A^m_{x_m}+\sum_{\substack{\{i_1,\cdots, i_{m-1}\}\\\subset\{1,\cdots m\}}}A^{i_1}_{x_{i_1}}\cdots A^{i_{m-1}}_{x_{i_{m-1}}}\rho A^{i_1}_{x_{i_1}}\cdots A^{i_{m-1}}_{x_{i_{m-1}}}+\cdots+\sum_{i_1=1}^{m}A^{i_1}_{x_{i_1}}\rho A^{i_1}_{x_{i_1}}\nonumber+\rho\Bigg].
\end{align}
Thus, by induction, Eq.~\eqref{b1} is proved.
Consequently, under PPM protocol the state $\rho_j$ shared by Alice$^{1, j}$, Alice$^{2, j}, \cdots$, Alice$^{m, j}, \cdots$, Alice$^{m+1,1}, \cdots$, Alice$^{n,1}$ and Bob has the form:
\begin{align}
\rho_{j}=&\frac{1}{2^m}\sum_{q+p=0}^{m}
\sum_{\substack{\{\beta_1,\cdots,\beta_p \}\\\subset\{1,\cdots,m\}}}\sum_{\substack{\{\gamma_1\cdots \gamma_q\}\subset\\\{1,\cdots,m\}\backslash\{\beta_1,\cdots,\beta_p \}}}\sum_{x_{\beta_s\subset\{0,2\}}}\sum_{x_{\gamma_t\subset\{1,2\}}}\nonumber\\  
&\frac{\alpha_{j-1}^q(1-\alpha_{j-1})^{m-p-q}}{2^{p+q}}   A^{\beta_1,{j-1}}_{x_{\beta_1}}\cdots A^{\beta_p,{j-1}}_{x_{\beta_p}}A^{\gamma_1,{j-1}}_{x_{\gamma_1}}\cdots A^{\gamma_q,{j-1}}_{x_{\gamma_q}}\rho_{j-1} A^{\beta_1,{j-1}}_{x_{\beta_1}}\cdots A^{\beta_p,{j-1}}_{x_{\beta_p}}A^{\gamma_1,{j-1}}_{x_{\gamma_1}}\cdots A^{\gamma_q,{j-1}}_{x_{\gamma_q}},
\end{align}  
where $A^{l_s,j}_2 = A^{g_t,j}_2 = \mathbbm{1}$. 
Based on our PPM protocol, we can assume that among the m systems, $p$ systems perform measurement $A_0$, $q$ systems perform measurement $A_1$, and the remaining systems undergo no measurement. By exhaustively considering all possible measurement schemes, we obtain the first three summation terms and the probability $\alpha_{j-1}^q(1-\alpha_{j-1})^{m-p-q}$. Furthermore, according to the form of the post-measurement quantum state given in Lemma~\ref{lemma3}, we derive the last two summation terms and the coefficient $\frac{1}{2^{p+q}}$. Noticing that each term in $\rho_j$ admits the form $$A^{\beta_1,{j-1}}_{0}\cdots A^{\beta_p,{j-1}}_{0}A^{\gamma_1,{j-1}}_{1}\cdots A^{\gamma_q,{j-1}}_{1}\rho_{j-1} A^{\beta_1,{j-1}}_{0}\cdots A^{\beta_p,{j-1}}_{0}A^{\gamma_1,{j-1}}_{1}\cdots A^{\gamma_q,{j-1}}_{1},$$ we can determine the corresponding coefficient for every such term as
\begin{align}
&\sum_{u+v=0}^{m-p-q}\binom{m-p-q}{u}\binom{m-p-q-u}{v}\frac{\alpha_{j-1}^{q+v}(1-\alpha_{j-1})^{m-(p+u)-(q+v)}}{2^{(p+u)+(q+v)}}\nonumber\\
=&\left(\frac{1}{2}\right)^p\left(\frac{\alpha_{j-1}}{2}\right)^q\left(1-\frac{\alpha_{j-1}}{2}+\frac{1}{2}\right)^{m-p-q}\nonumber\\
=&\frac{\alpha_{j-1}^{q}(3-\alpha_{j-1})^{m-p-q}}{2^{m}}.
\end{align}   
Hence, we arrive at the expression for $\rho_j$ given in Eq.~\eqref{rho}.
\end{proof}

\begin{lemma}\label{lemma4}
Under our PPM protocol, the network composed of Alice$^{1,j}$, $\cdots$, Alice$^{m,j}$,  Alice$^{m+1,1}$, $\cdots$, Alice$^{n,1}$, and Bob is nonlocal if
\begin{align}
S_{n}^{m,j}=\sqrt[n]{I_n^{m,j}}+\sqrt[n]{J_n^{m,j}}=2\left(\cos\delta\sin2\theta\prod_{l=1}^{j-1}\frac{2-\alpha_l}{2}+\sin\delta\frac{\alpha_j}{2^{j-1}}+\sin\delta\cos2\theta(1-\alpha_j)\right)^\frac{m}{n}(\cos\delta+\sin\delta)^\frac{n-m}{n}>2.
\end{align}
\end{lemma}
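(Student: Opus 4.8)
The plan is to exploit the complete factorization of the network correlators over the $n$ independent branches, reduce the computation to a single branch, and then reassemble. First I would note that, since the sources are independent, $\rho=\bigotimes_i\rho_{A_iB}$, and Bob's observables factorize as $B_y=b_y^{\otimes n}$ with the single-qubit component $b_y=\pm\sin\delta\,\sigma_z+\cos\delta\,\sigma_x$ acting on the qubit Bob shares with branch $i$. Hence each term $\langle A^1_{x_1}\cdots A^n_{x_n}B_y\rangle$ splits as $\prod_i\operatorname{Tr}[(A^i_{x_i}\otimes b_y)\rho_{A_iB}]$, and because the sums defining $I_n^{m,j}$ and $J_n^{m,j}$ run independently over each $x_i$, both quantities factor into products of $n$ single-branch sums. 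This reduces the lemma to two calculations: one \emph{active} branch that has passed through $j-1$ intermediate rounds followed by a final measurement, and one \emph{preserved} branch measured only once.

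Next I would extract the effective single-branch channel from Lemma~\ref{lemma3}. Because the measurement operators act locally, the recursion in Eq.~\eqref{rho} decouples over branches, and each intermediate round applies to Alice's qubit the non-selective map $\mathcal{E}_{\alpha}(\rho)=\tfrac{3-\alpha}{4}\rho+\tfrac14\sigma_x\rho\,\sigma_x+\tfrac{\alpha}{4}\sigma_z\rho\,\sigma_z$, which encodes the branches $x=0\to\sigma_x$ (weight $\tfrac12$), $x=1$ with heads $\to\sigma_z$ (weight $\tfrac{\alpha}{2}$), and the identity otherwise. The cleanest route is the Heisenberg picture: a direct Pauli computation gives $\mathcal{E}_{\alpha}^\dagger(\sigma_x)=\tfrac{2-\alpha}{2}\sigma_x$, $\mathcal{E}_{\alpha}^\dagger(\sigma_z)=\tfrac12\sigma_z$, and $\mathcal{E}_{\alpha}^\dagger(\mathbbm{1})=\mathbbm{1}$. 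Composing over the first $j-1$ rounds then produces exactly the two damping factors $P_j=\prod_{l=1}^{j-1}\tfrac{2-\alpha_l}{2}$ (attached to $\sigma_x$) and $2^{-(j-1)}$ (attached to $\sigma_z$) that appear in the statement, while leaving Bob's marginal untouched.

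Then I would evaluate the single-branch correlators on the source $|\psi\rangle$, using its correlation data $\langle\sigma_x\otimes\sigma_x\rangle=\sin2\theta$, $\langle\sigma_z\otimes\sigma_z\rangle=1$, and $\langle\mathbbm{1}\otimes\sigma_z\rangle=\cos2\theta$. For an active branch the final PPM contributes $\sigma_x$ when $x=0$ (carrying $P_j$), $\sigma_z$ when $x=1$ and heads (carrying $2^{-(j-1)}$, weight $\alpha_j$), and the identity with a deterministic output when tails (weight $1-\alpha_j$, contributing Bob's marginal $\cos2\theta$). Evaluating the $I_n$ combination $\sum_x$ against $b_0$ and the $J_n$ combination $\sum_x(-1)^x$ against $b_1$, both collapse to the \emph{same} branch factor $f_j=\cos\delta\sin2\theta\,P_j+\sin\delta\,2^{-(j-1)}\alpha_j+\sin\delta\cos2\theta(1-\alpha_j)$; this coincidence of the $I$- and $J$-factors is what forces $I_n^{m,j}=J_n^{m,j}$. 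An analogous but simpler evaluation of one undisturbed preserved branch (a sharp measurement on the fresh source) reproduces the remaining factor $\cos\delta+\sin\delta$ of the statement.

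Finally I would assemble $I_n^{m,j}=J_n^{m,j}=f_j^{\,m}(\cos\delta+\sin\delta)^{\,n-m}$, so that $S_n^{m,j}=\sqrt[n]{I_n^{m,j}}+\sqrt[n]{J_n^{m,j}}=2\,f_j^{\,m/n}(\cos\delta+\sin\delta)^{\,(n-m)/n}$, which is precisely the claimed expression, and the violation condition reads $S_n^{m,j}>2$. I expect the main obstacle to be the second step: justifying rigorously that the multi-branch recursion of Lemma~\ref{lemma3} decouples into a single per-branch channel and verifying that the two Pauli damping rates compose correctly across all $j-1$ rounds. A secondary subtlety is fixing the output convention for the ``tails'' (identity) event in the final round so that it contributes Bob's marginal $\cos2\theta$ rather than zero, as this term is indispensable for reproducing the stated $f_j$.
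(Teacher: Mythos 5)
Your argument is correct and arrives at the stated formula by a genuinely different, and considerably cleaner, route than the paper's. The paper works globally: it expands $I_n^{m,j}$ and $J_n^{m,j}$ as in Eqs.~\eqref{i}--\eqref{j}, inserts the multi-branch expression for $\rho_j$ from Lemma~\ref{lemma3}, and evaluates the traces by tracking the anticommutations $A_0A_1=-A_1A_0$ through several nested binomial sums (Eq.~\eqref{trace} and the two identities following it) before recognizing the result as the multinomial expansion of the branch factor raised to the power $m$. You instead factorize over the $n$ independent branches at the outset --- legitimate, since every operation in the protocol is a local product operation and the state stays a product across branches --- and reduce each active branch to the single-qubit Pauli channel $\mathcal{E}_{\alpha}$, whose Heisenberg action $\sigma_x\mapsto\tfrac{2-\alpha}{2}\sigma_x$, $\sigma_z\mapsto\tfrac12\sigma_z$, $\mathbbm{1}\mapsto\mathbbm{1}$ produces the damping factors $P_j$ and $2^{-(j-1)}$ in two lines and makes the equality $I_n^{m,j}=J_n^{m,j}$ transparent; the paper's combinatorial bookkeeping is entirely bypassed. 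Your handling of the ``tails'' event (a deterministic output, contributing Bob's marginal $\cos2\theta$) matches exactly the weight $(1-\alpha_j)^{m-h-f}$ attached to the identity operator in Eqs.~\eqref{i}--\eqref{j}, so that convention is not a gap. The one point to repair is the preserved-branch factor: a sharp $\{\sigma_x,\sigma_z\}$ measurement on a fresh source $\cos\theta|00\rangle+\sin\theta|11\rangle$ against $b_y=\pm\sin\delta\,\sigma_z+\cos\delta\,\sigma_x$ yields $\cos\delta\sin2\theta+\sin\delta$, not $\cos\delta+\sin\delta$, so your method does not literally reproduce the factor $(\cos\delta+\sin\delta)^{(n-m)/n}$ appearing in the lemma. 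Note, however, that the proof of Thm.~\ref{theorem2} actually uses $(\cos^2\delta+\sin\delta)^{1/n}=(\cos\delta\sin2\theta+\sin\delta)^{1/n}$ under the setting $2\theta+\delta=\pi/2$, which agrees with your calculation; the discrepancy is therefore a slip in the lemma's statement (and in Eq.~\eqref{trace1}) rather than an error in your derivation, and your approach has the virtue of exposing it.
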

\begin{proof}

Under our setting, the quantum network comprised by Alice$^{1,j}$, $\cdots$, Alice$^{m,j}$,  Alice$^{m+1,1}$, $\cdots$, Alice$^{n,1}$, and Bob is nonlocal if      

\begin{equation}
S_n^{m.,j}=\sqrt[n]{\left|I_n^{m,j}\right|}+\sqrt[n]{\left|J_n^{m,j}\right|} > 2, 
\end{equation}
where 
\begin{align}\label{i}
I_n^{m,j}=&\sum_{\substack{x_{m+1}\cdots,x_n\\\in\{0,1\}}}\sum_{h+f=0}^{m}
\sum_{\substack{\{\xi_1,\cdots,\xi_h \}\\\subset\{1,\cdots,m\}}}\sum_{\substack{\{\zeta_1\cdots \zeta_f\}\subset\\\{1,\cdots,m\}\backslash\{\xi_1,\cdots,\xi_h \}}}\alpha_{j}^f(1-\alpha_{j})^{m-h-f}\text{Tr}\left\{\left(\mathcal{A}_0^{\xi_1\cdots\xi_h,j}\mathcal{A}_1^{\zeta_1\cdots\zeta_f,j}\mathcal{A}_{x_{m+1}\cdots x_n}^{m+1 \cdots n,j}B_0\right)\rho_{j}\right\},
\end{align}
\begin{align}\label{j}
J_n^{m,j}=&\sum_{\substack{x_{m+1}\cdots,x_n\\\in\{0,1\}}}\sum_{h+f=0}^{m}
\sum_{\substack{\{\xi_1,\cdots,\xi_h \}\\\subset\{1,\cdots,m\}}}\sum_{\substack{\{\zeta_1\cdots \zeta_f\}\subset\\\{1,\cdots,m\}\backslash\{\xi_1,\cdots,\xi_h \}}}(-1)^{m-h+\sum_{x_{i}=m+1}^{n}}\alpha_{j}^f(1-\alpha_{j})^{m-h-f}\nonumber\\&\cdot\text{Tr}\left\{\left(\mathcal{A}_0^{\xi_1\cdots\xi_h,j}\mathcal{A}_1^{\zeta_1\cdots\zeta_f,j}\mathcal{A}_{x_{m+1}\cdots x_n}^{m+1 \cdots n,j}B_1\right)\rho_{j}\right\}.
\end{align}
The Eq.~\eqref{i} and \eqref{j} performs a summation over all possible configurations of the $m$ systems, where each system undergoes one of three options: an $A_0$ measurement, an $A_1$ measurement, or no measurement. A weight of $\alpha_j$ is assigned to the cases where an $A_1$ measurement is performed, and a weight of $1 -\alpha_j$ is assigned to the cases with no measurement. 

We now proceed to analyze each term in Eqs.~\eqref{i} and \eqref{j}. To simplify notation, we introduce the abbreviation
$$M_{j,\pm}^{h,f}=\sum_{\substack{x_{m+1}\cdots,x_n\\\in\{0,1\}}}(\pm1)^{\sum_{x_{i}=m+1}^{n}}\mathcal{A}_0^{\xi_1\cdots\xi_h,j}\mathcal{A}_1^{\zeta_1\cdots\zeta_f,j}\mathcal{A}_{x_{m+1}\cdots x_n}^{m+1 \cdots n,j}.$$
In the case where, among the $m$ systems, $h$ systems undergo the $A_0$ measurement and $f$ systems undergo the $A_1$ measurement, we have

\begin{align}\label{trace}
&\text{Tr}\left\{\left(M_{j,+}^{h,f}B_0\right)\rho_{j}\right\} \nonumber\\
=&\sum_{p+q=0}^m\sum_{\substack{\{\beta_1,\cdots,\beta_p \}\\\subset\{1,\cdots,m\}}}\sum_{\substack{\{\gamma_1\cdots \gamma_q\}\subset\\\{1,\cdots,m\}\backslash\{\beta_1,\cdots,\beta_p \}}}\frac{\alpha_{j-1}^{q}(3-\alpha_{j-1})^{m-p-q}}{2^m}\text{Tr}\left\{\left(M_{j}^{h,f}B_0\right)\mathcal{A}_0^{p,j-1}\mathcal{A}_1^{q,j-1}\rho_{j-1} \mathcal{A}_0^{p,j-1}\mathcal{A}_1^{q,j-1}\right\}\nonumber\\
=&\sum_{p+q=0}^m\sum_{\substack{w+u=\text{max}\{0,p-(m-h-f)\}\\w\leqslant h,u\leqslant f}}^{\text{min}\{p,h+f\}}\sum_{\substack{v+z=\text{max}\{0,q-(m-h-f-(p-w-u))\}\\v\leqslant h-w,z\leqslant f-u}}^{\text{min}\{q,h++f-w-u\}}(-1)^{u+v}\binom{h}{w}\binom{f}{u}\binom{h-w}{v}\binom{f-u}{z}\nonumber\\
&\binom{m-h-f}{p-w-u}\binom{m-h-f-(p-w-u)}{q-u-z}\frac{\alpha_{j-1}^{q}(3-\alpha_{j-1})^{m-p-q}}{2^{2m}}\text{Tr}\left\{\left(M_{j}^{h,f}B_0\right)\rho_{{j-1}}\right\} .
             \end{align}
\begin{figure}[t]
\centering
\includegraphics[width=0.48\textwidth]{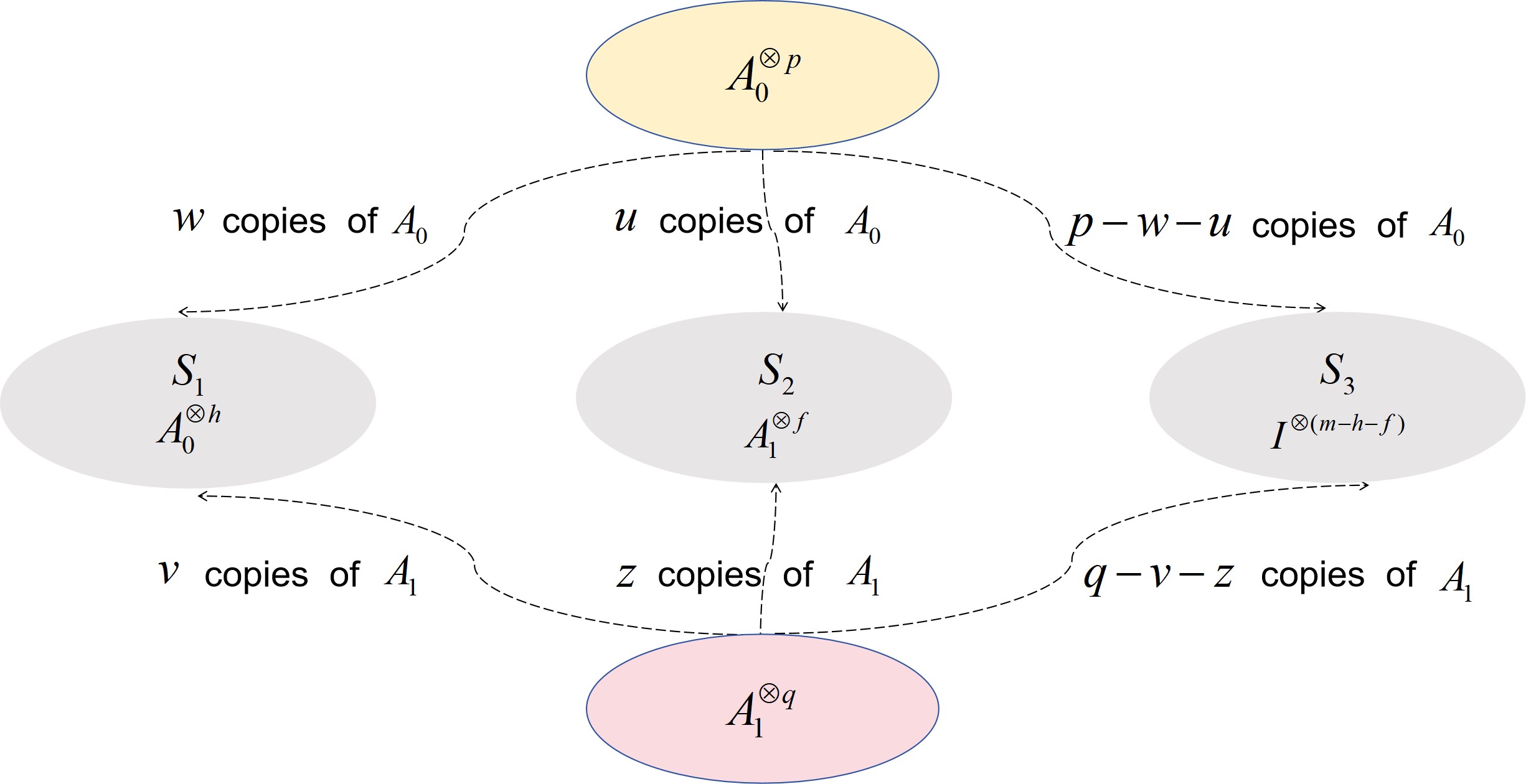}
\caption{Schematic explanation of the derivation of Eq.~\eqref{trace}. The $n$ measurement systems are divided into three groups, as shown in the central ellipses: $h$ systems measured with $A_0$, $f$ with $A_1$, and the remaining $m-h-f$ with no measurement. The top and bottom ellipses represent the $p$ measurements of $A_0$ and $q$ measurements of $A_1$ performed on the state $\rho_{j-1}$, respectively. The arrows indicate how the measurement choices at step $j$ are correlated with the system positions from step $j-1$.}\label{fig:schematic-explanation}
\end{figure}
To obtain the second equation, we first use the relations $A_0^2=A_1^2=I$ and $A_0 A_1=-A_1 A_0$, together with the cyclic property of the trace. Then, as shown in Fig.~\ref{fig:schematic-explanation}, we partition the $m$ measurements on $\rho_j$ into three groups:
\begin{itemize}
    \item $S_1$ : the $h$ systems measured with $A_0$,
    \item $S_2$ : the $f$ systems measured with $A_1$,
    \item $S_3$ : the remaining $m-h-f$ systems.
\end{itemize}
Among the $p$ measurements of $A_0$ acting on $\rho_{j-1}$, suppose $w$ are in $S_1, u$ are in $S_2$, and the remaining $p-w-u$ are in $S_3$. Because each of the $u$ measurements of $A_0$ in $S_2$ anti-commutes with the $A_1$ measurements in that same group, a factor $(-1)^u$ arises. Summing over all such assignments yields
$$
(-1)^u\binom{h}{w}\binom{f}{u}\binom{m-h-f}{p-w-u}.
$$
Similarly, distributing the $q$ measurements of $A_2$ among the remaining systems gives
$$
(-1)^v\binom{h-w}{v}\binom{f-u}{z}\binom{m-h-f-(p-w-u)}{q-v-z},
$$
where the sign factor $(-1)^v$ originates from anti-commutations with measurements in the respective subgroups.

To compute Eq.~\eqref{trace}, we divide it into two parts. Firstly, since
\begin{align}
\alpha_{j-1}^q(3-\alpha_{j-1})^{m-p-q}
=\alpha_{j-1}^v(3-\alpha_{j-1})^{h-w-v}\alpha_{j-1}^z(3-\alpha_{j-1})^{f-u-z}\alpha_{j-1}^{q-v-z}(3-\alpha_{j-1})^{m-h-f-(p-w-u)-(q-u-z)},
\end{align}
we have
\begin{align}\label{one}
&\sum_{q=0}^{m-p}\sum_{v,z}(-1)^v\binom{h-w}{v}\binom{f-u}{z}\binom{m-h-f-(p-w-u)}{q-u-z}\frac{\alpha_{j-1}^{q}(3-\alpha_{j-1})^{m-p-q}}{2^{2m}}\nonumber\\
=&\frac{1}{2^{2m}}\left(3-\alpha_{j-1}-\alpha_{j-1}\right)^{h-w}\left(3-\alpha_{j-1}+\alpha_{j-1}\right)^{f-u}\left(3-\alpha_{j-1}+\alpha_{j-1}\right)^{m-h-f-(p-w-u)}\nonumber\\
=&\frac{1}{2^{2m}}\left(3-2\alpha_{j-1}\right)^{h-w}\left(3\right)^{f-u}\left(3\right)^{m-h-f-(p-w-u)}.
\end{align}
The first equality is derived by applying the binomial theorem, and the following equation adheres to an analogous principle. Secondly, substituting Eq.~\eqref{one} into Eq.~\eqref{trace}, we obtain
\begin{align}
&\sum_{p=0}^{m}\sum_{w,u}(-1)^u\binom{h}{w}\binom{f}{u}\binom{m-h-f}{p-w-u}\frac{1}{2^{2m}}\left(3-2\alpha_{j-1}\right)^{h-w}3^{f-u}3^{m-h-f-(p-w-u)}\nonumber\\
=&\frac{1}{2^{2m}}(4-2\alpha_{j-1})^{h}2^{f}4^{m-h-f}\nonumber\\
=&\frac{(2-\alpha_{j-1})^{h}}{2^{h+f}}.
\end{align}
So we have 
\begin{align}\label{trace1}
\text{Tr}\left\{\left(M_{j,+}^{h,f}B_0\right)\rho_{j}\right\}=&\frac{(2-\alpha_j)^{h}}{2^{h+f}}\text{Tr}\left\{\left(M_{j,+}^{h,f}B_0\right)\rho_{j-1}\right\}\nonumber\\
=&\prod_{l=1}^{j-1}\frac{(2-\alpha_l)^h}{2^{h+f}}\text{Tr}\left\{\left(M_{j,+}^{h,f}B_0\right)\rho_{1}\right\}\nonumber\\
=&\prod_{l=1}^{j-1}\frac{(2-\alpha_l)^h}{2^{h+f}}(\cos\delta\sin2\theta)^h(\sin\delta)^f(\sin\delta\cos2\theta)^{m-h-f}(\cos\delta+\sin\delta)^{n-m},
\end{align}
and
\begin{align}\label{trace2}
\text{Tr}\left\{\left(M_{j,-}^{h,f}B_1\right)\rho_{j}\right\}=&\frac{(2-\alpha_j)^{h}}{2^{h+f}}\text{Tr}\left\{\left(M_{j,-}^{h,f}B_1\right)\rho_{j-1}\right\}\nonumber\\
=&\prod_{l=1}^{j-1}\frac{(2-\alpha_l)^h}{2^{h+f}}\text{Tr}\left\{\left(M_{j,-}^{h,f}B_1\right)\rho_{1}\right\}\nonumber\\
=&(-1)^{m-h}\prod_{l=1}^{j-1}\frac{(2-\alpha_l)^h}{2^{h+f}}(\cos\delta\sin2\theta)^h(\sin\delta)^f(\sin\delta\cos2\theta)^{n-h-f}(\cos\delta+\sin\delta)^{n-m}.
\end{align}
Substituting Eq.~\eqref{trace1} and Eq.~\eqref{trace2} into  Eq.~\eqref{i} and Eq.~\eqref{j}, respectively, we obtain

\begin{align}
I_n^{m,j}=J^{m,j}_n=&\sum_{h+f=0}^{m}\binom{m}{h}\binom{m-h}{f}
\alpha_{j}^f(1-\alpha_j)^{m-h-f}
\nonumber\\&\cdot\prod_{l=1}^{j-1}\frac{(2-\alpha_l)^h}{2^{h+f}}(\cos\delta\sin2\theta)^h(\sin\delta)^f(\sin\delta\cos2\theta)^{m-h-f}(\cos\delta+\sin\delta)^{n-m}\nonumber\\
=&\left(\cos\delta\sin2\theta\prod_{l=1}^{j-1}\frac{2-\alpha_l}{2}+\sin\delta\frac{\alpha_j}{2^{j-1}}+\sin\delta\cos2\theta(1-\alpha_j)\right)^m(\cos\delta+\sin\delta)^{n-m}.\nonumber\\
\end{align}

Therefore, the quantum network consisting of Alice$^{1,j}$, $\dots$, Alice$^{m,j}$, Alice$^{m+1,1}$, $\dots$, Alice$^{n,1}$, and Bob is nonlocal provided that
\begin{align}
S_{n}^{m,j}=\sqrt[n]{I_n^{m,j}}+\sqrt[n]{J_n^{m,j}}=2\left(\cos\delta\sin2\theta\prod_{l=1}^{j-1}\frac{2-\alpha_l}{2}+\sin\delta\frac{\alpha_j}{2^{j-1}}+\sin\delta\cos2\theta(1-\alpha_j)\right)^\frac{m}{n}(\cos\delta+\sin\delta)^\frac{n-m}{n}>2.
\end{align}
\end{proof}

\section{Network nonlocality sharing under noises}
In this section, we extend the analysis of the PPM protocol to realistic scenarios where the initial entangled states are subject to noise. We investigate how two common types of noise—depolarizing noise and amplitude damping noise—affect the ability to share network nonlocality sequentially across all branches. For each noise model, we derive the modified conditions on the probability sequence $\{\alpha_j\}_j$ required to maintain a violation of the nonlocality inequality up to a given round $k$.

We begin with the first scenario, where each quantum state is subject to depolarizing noise channel, resulting in the transformed state:
\begin{equation}
\rho_{A_{i,1}B}=(1-p)|\psi\rangle\langle\psi|+p\frac{\mathbbm{1}}{4},\quad i=1\cdots,n,
\end{equation}
where $|\psi\rangle=\cos\theta|00\rangle+\sin\theta|11\rangle$. Under the assumption that the noise parameters are identical for all quantum states, we obtain from Eq.~\eqref{trace1} and Eq.~\eqref{trace2}
\begin{align}
\text{Tr}\left\{\left(M_{j,+}^{h,f}B_0\right)\rho_{j}\right\}
=&\prod_{l=1}^{j-1}\frac{(2-\alpha_l)^h}{2^{h+f}}\text{Tr}\left\{\left(M_{j,+}^{h,f}B_0\right)\rho_{1}\right\}\nonumber\\
=&(1-p)^n\prod_{l=1}^{j-1}\frac{(2-\alpha_l)^h}{2^{h+f}}(\cos\delta\sin2\theta)^h(\sin\delta)^f(\sin\delta\cos2\theta)^{n-h-f},
\end{align}
and
\begin{align}
\text{Tr}\left\{\left(M_{j,-}^{h,f}B_0\right)\rho_{j}\right\}
=&\prod_{l=1}^{j-1}\frac{(2-\alpha_l)^h}{2^{h+f}}\text{Tr}\left\{\left(M_{j,+}^{h,f}B_0\right)\rho_{1}\right\}\nonumber\\
=&(1-p)^n(-1)^{n-h}\prod_{l=1}^{j-1}\frac{(2-\alpha_l)^h}{2^{h+f}}(\cos\delta\sin2\theta)^h(\sin\delta)^f(\sin\delta\cos2\theta)^{n-h-f}.
\end{align}
Consequently, the correlation functions take the form
\begin{align}
I_n^{n,j}=J^{n,j}_n=&\sum_{h+f=0}^{n}\binom{n}{h}\binom{n-h}{f}
\alpha_{j}^f(1-\alpha_j)^{m-h-f}
\nonumber\\&\cdot\prod_{l=1}^{j-1}\frac{(2-\alpha_l)^h}{2^{h+f}}(\cos\delta\sin2\theta)^h(\sin\delta)^f(\sin\delta\cos2\theta)^{n-h-f}(1-p)^n\nonumber\\
=&\left(\cos\delta\sin2\theta\prod_{l=1}^{j-1}\frac{2-\alpha_l}{2}+\sin\delta\frac{\alpha_j}{2^{j-1}}+\sin\delta\cos2\theta(1-\alpha_j)\right)^n(1-p)^n.\nonumber\\
\end{align}
Hence, in the case of $\delta+2\theta=\pi/2$, network nonlocality can be shared to the $k$-th round under depolarizing noise if, for all $1 \leqslant j\leqslant k$,
\begin{equation}\label{eq:violation-under-depolarizing-noise}
\alpha_j
>
\frac{2^{j-1}\left((1-p)^{-1}-\sin^22\theta \prod_{l=1}^{j-1}\frac{2-\alpha_l}{2}-\cos^22\theta\right)}{\cos2\theta\left(1-2^{j-1} \cos2\theta\right)},
\end{equation}

In the second scenario, we consider that each initial quantum state is transmitted through an amplitude damping noise channel, resulting in the transformed state:
\begin{align}
\rho_{i}&=\sum_j\left(I \otimes K_{j}\right)|\psi\rangle\langle\psi|\left(I \otimes K_{j}\right)^{\dagger}=\left[\begin{array}{cccc}
\cos ^2 \theta & 0 & 0 & \cos \theta \sin \theta \sqrt{1-p} \\
0 & 0 & 0 & 0 \\
0 & 0 & p \sin ^2 \theta & 0 \\
\cos \theta \sin \theta \sqrt{1-p} & 0 & 0 & \left(1-p\right) \sin ^2 \theta
\end{array}\right],\quad i=1\cdots,n,
\end{align}
where
\begin{equation}
K_1=\left[\begin{array}{cc}
0 & \sqrt{p} \\
0 & 0
\end{array}\right], \quad K_2=\left[\begin{array}{cc}
1 & 0 \\
0 & \sqrt{1-p}
\end{array}\right] .
\end{equation}
In the case where the noise parameters are uniform across all quantum states, combining Eq.~\eqref{trace1} and Eq.~\eqref{trace2} yields
\begin{align}
\text{Tr}\left\{\left(M_{j,+}^{h,f}B_0\right)\rho_{j}\right\}
=&\prod_{l=1}^{j-1}\frac{(2-\alpha_l)^h}{2^{h+f}}\text{Tr}\left\{\left(M_{j,+}^{h,f}B_0\right)\rho_{1}\right\}\nonumber\\
=&\prod_{l=1}^{j-1}\frac{(2-\alpha_l)^h}{2^{h+f}}(\sqrt{1-p}\cos\delta\sin2\theta)^h((1-2p\sin^2\theta)\sin\delta)^f(\sin\delta\cos2\theta)^{n-h-f},
\end{align}
and
\begin{align}
\text{Tr}\left\{\left(M_{j,-}^{h,f}B_0\right)\rho_{j}\right\}
=&\prod_{l=1}^{j-1}\frac{(2-\alpha_l)^h}{2^{h+f}}\text{Tr}\left\{\left(M_{j,+}^{h,f}B_0\right)\rho_{1}\right\}\nonumber\\
=&(-1)^{n-h}\prod_{l=1}^{j-1}\frac{(2-\alpha_l)^h}{2^{h+f}}(\sqrt{1-p}\cos\delta\sin2\theta)^h((1-2p\sin^2\theta)\sin\delta)^f(\sin\delta\cos2\theta)^{n-h-f}.
\end{align}
Consequently, the correlation functions can be expressed as
\begin{align}
I_n^{n,j}=J^{n,j}_n=&\sum_{h+f=0}^{n}\binom{n}{h}\binom{n-h}{f}
\alpha_{j}^f(1-\alpha_j)^{m-h-f}
\nonumber\\&\cdot\prod_{l=1}^{j-1}\frac{(2-\alpha_l)^h}{2^{h+f}}(\sqrt{1-p}\cos\delta\sin2\theta)^h((1-2p\sin^2\theta)\sin\delta)^f(\sin\delta\cos2\theta)^{n-h-f}\nonumber\\
=&\left(\sqrt{1-p}\cos\delta\sin2\theta\prod_{l=1}^{j-1}\frac{2-\alpha_l}{2}+(1-2p\sin^2\theta)\sin\delta\frac{\alpha_j}{2^{j-1}}+\sin\delta\cos2\theta(1-\alpha_j)\right)^n.\nonumber\\
\end{align}
Hence, in the case of $\delta+2\theta=\pi/2$, the condition for sharing network nonlocality up to the $k$-th round under amplitude damping noise if, for all $1 \leqslant j \leqslant k$,
\begin{align}\label{eq:violation-under-damping-noise}
\alpha_j>\frac{2^{j-1}\left(1-\sqrt{1-p}\sin^22\theta \prod_{l=1}^{j-1}\frac{2-\alpha_l}{2}-\cos^22\theta\right)}{(1-2p\sin^2\theta)\cos2\theta-2^{j-1}\cos^22\theta}.
\end{align}


\end{document}